\definecolor{airforceblue}{rgb}{0.36, 0.54, 0.66}
\definecolor{darkolivegreen}{rgb}{0.33, 0.42, 0.18}
\definecolor{cerulean}{rgb}{0.0, 0.48, 0.65}
\newcommand{\mbraket}[1]{\langle #1 \rangle}
\newcommand{\braketm}[3]{\langle #1 \hspace{1pt} |\hspace{1pt} #2 \hspace{1pt}| \hspace{1pt} #3 \rangle}
\newtheorem{theorem}{Theorem}
\newtheorem{lemma}{Lemma}
\newcommand{\FC}{\mathcal{F}}
\newcommand{\LC}{\mathcal{L}}
\newcommand{\QC}{\mathcal{Q}}
\begin{document}
%\linenumbers 
%\preprint{BARI-TH/761-24}
%\begin{flushright} {BARI-TH/761-24}\end{flushright}
\title{Self-testing tilted strategies for maximal loophole-free nonlocality}

\author{Nicolas Gigena}\affiliation{IFLP/CONICET--Departamento  de F\'{\i}sica, Universidad Nacional de La Plata, C.C. 67, La Plata (1900), Argentina}

\author{Ekta Panwar}

\affiliation{Institute of Theoretical Physics and Astrophysics, Faculty of Mathematics, Physics and Informatics, University of Gdańsk, ul. Wita Stwosza 57, 80-308 Gdańsk, Poland}
\affiliation{International Centre for Theory of Quantum Technologies, University of Gdańsk, 80-308 Gdańsk, Poland}
\author{Giovanni Scala}
\affiliation{Dipartimento Interateneo di Fisica, Politecnico di Bari, 70126 Bari, Italy}
\affiliation{International Centre for Theory of Quantum Technologies, University of Gdańsk, 80-308 Gdańsk, Poland}
\affiliation{INFN, Sezione di Bari, 70126 Bari, Italy}
\author{Mateus Araújo}\affiliation{Departamento de Física Teórica, Atómica y Óptica, Universidad de Valladolid, 47011 Valladolid, Spain}

\author{Máté Farkas}
\affiliation{Department of Mathematics, University of York, Heslington, York, YO10 5DD, United Kingdom}

\author{Anubhav Chaturvedi}\email{anubhav.chaturvedi@pg.edu.pl}
\affiliation{Faculty of Applied Physics and Mathematics,
 Gdańsk University of Technology, Gabriela Narutowicza 11/12, 80-233 Gdańsk, Poland}
 \affiliation{International Centre for Theory of Quantum Technologies, University of Gdańsk, 80-308 Gdańsk, Poland}
\begin{abstract}
The degree of experimentally attainable nonlocality, as gauged by the loophole-free or effective violation of Bell inequalities, remains severely limited due to inefficient detectors. We address an experimentally motivated question: Which quantum strategies attain the maximal loophole-free nonlocality in the presence of inefficient detectors? For \emph{any} Bell inequality and \emph{any} specification of detection efficiencies, the optimal strategies are those that maximally violate a \emph{tilted} version of the Bell inequality in ideal conditions. In the simplest scenario, we demonstrate that the quantum strategies that maximally violate the \emph{doubly-tilted} versions of \emph{Clauser-Horne-Shimony-Holt} inequality are \emph{unique} up to local isometries. We utilize a Jordan's lemma and Gr\"obner basis-based proof technique to analytically derive self-testing statements for the \emph{entire} family of doubly-tilted CHSH inequalities and numerically demonstrate their robustness. These results enable us to reveal the insufficiency of even high levels of the \emph{Navascu\'es--Pironio--Ac\'in} hierarchy to saturate the maximum quantum violation of these inequalities.
\end{abstract}

%%%%%%%%%%%%%%%%%%%%%%%%%%%%%%%%%%%%%%%%%%%%%%%%%%%%%%%%%%%%%
%%%%%%%%%%%%%%%%%%%%%%%%%%%%%%%%%%%%%%%%%%%%%%%%%%%%%%%%%%%%%
%%%%%%%%%%%%%%%%%%%%%%%%%%%%%%%%%%%%%%%%%%%%%%%%%%%%%%%%%%%%%

\maketitle
 
\section*{Introduction}
Correlations born of local measurements performed on entangled quantum systems shared between distant observers resist local-causal explanations, a phenomenon known as \emph{Bell nonlocality} \cite{Bell1964, Brunner2014}. Apart from their foundational significance, nonlocal correlations enable several classically impossible information processing and cryptographic applications such as unconditionally secure Device-Independent Quantum Key Distribution (DIQKD) \cite{Ekert1991, mayers1998quantum, BHK05, Acin2007, PABGMS09,Ghoreishi2025}. The efficacy of these applications relies on loophole-free certification of strong nonlocal correlations. In particular, the \emph{detection loophole}, posed by the lack of perfect detectors in Bell tests, is the most persistent obstacle in the experimental realization of strong long-range loophole-free nonlocal correlations. 
The detection efficiency of a measuring party is the ratio of particles detected to the total number of particles emitted by the source. The effective detection efficiency $\eta$ depends on the detectors and decays with the distance from the source. Closing the detection loophole in Bell experiments amounts to having an effective detection efficiency $\eta$ higher than a threshold value $\eta^*$, referred to as the \emph{critical detection efficiency}.

Consequently, significant research efforts have been directed towards minimizing the critical detection efficiency requirement for loophole-free certification of nonlocality \cite{Eberhard1993,Massar2002, Vertesi2010,miklin2022exponentially,chaturvedi2024extending}. However, for real-world applications to be effective, mere violation of a Bell inequality is insufficient \cite{farkas2021}. Instead, the efficacy of such applications \cite{chaturvedi2024extending} typically requires a high degree of nonlocality and motivates the question: 
\begin{quote}
    Which quantum strategies yield the maximum loophole-free nonlocality in the presence of inefficient detectors? 
\end{quote}

As the extent of the violation of a (facet) Bell inequality corresponds to the distance of a nonlocal correlation from a facet of the local polytope, it translates to a reliable measure of nonlocality \cite{araujo2020bell}. Thus the question above boils down to finding the quantum strategies that yield the maximal loophole-free violation of a Bell inequality for specified detection efficiencies. Since the use of inefficient detectors results in the occurrence of ``no-click" events, to decide whether a Bell inequality is violated in a loophole-free way, these events must be included in the measurement statistics. An experimentally convenient way of including the ``no-click" events is to assign a valid outcome to them \cite{Eberhard1995}. Moreover, such local \emph{assignment} strategies have been proven to be optimal in the simplest Bell scenario \cite{PhysRevA.83.032123}. We use local assignment strategies to show that the quantum state and measurements maximally violating a given Bell inequality, in the presence of inefficient detectors, correspond to those maximally violating, under ideal conditions, a tilted version of the inequality. In the simplest Bell scenario, where up to a relabeling of measurements and outcomes the only facet inequality is the \emph{Clauser-Horne-Shimony-Holt} (CHSH) inequality \cite{Clauser1969}, attaining maximal loophole-free nonlocality amounts to maximally violating a ``doubly-tilted" CHSH inequality.

The doubly-tilted CHSH inequalities are generalizations of the tilted CHSH inequalities considered in \cite{tiltedasym, acin2012versus}, which correspond to the special case of one party having access to ideal detectors, and for which both the maximal violation and the quantum realization attaining it are known. In fact, in \cite{tiltedasym} it is shown using \emph{sum-of-squares} (SOS) decompositions that the optimal state and measurements are unique up to local unitaries, i.e., the maximal violation \emph{self-tests} the optimal quantum strategy. In contrast, we analytically derive self-testing statements for the entire family doubly-tilted CHSH inequalities as a function of the detection efficiencies via a novel proof technique based on Gröbner basis elimination (for an introduction to the method for a physics audience see Appendix A of \cite{Lee2017}) and Jordan's lemma (Chapter VII \cite{bhatia2013matrix}). We find that the optimal quantum strategy entails non-maximally incompatible observables for both parties and a partially entangled two-qubit state, with the optimal observables of a party also depending on the detection efficiency of the other party.

The analytical self-testing statements enable us to recover, up to arbitrary precision, the maximum quantum violation of the doubly-tilted CHSH inequalities. We compare these values with the corresponding upper bounds from the Navascués-Pironio-Acín (NPA) hierarchy \cite{Navascues2007, Navas_2008}. Strikingly, even higher levels (up to level 10) cannot saturate the maximal violation of the doubly-tilted CHSH inequalities. Consequently, the SOS decomposition-based self-testing technique \cite{tiltedasym} turns out to be analytically intractable for the doubly-tilted CHSH inequalities. We conclude by discussing implications of our self-testing statements towards device-independent cryptography with inefficient detectors and the complexity of characterizing the set of quantum correlations via the NPA hierarchy and the SOS decompositions self-testing method.

\section*{Results}
\subsection{Nonlocality with imperfect detectors} \label{sec_2}

Consider a bipartite Bell experiment wherein a source distributes a physical system to be shared between Alice and Bob. Alice and Bob perform one out of $m_A$, $m_B$ possible measurements, labeled $x$, $y$, which have $d_A$, $d_B$ possible outcomes labeled $a$, $b$, respectively. The four-tuple $(m_A,m_B,d_A,d_B)$ specifies the Bell scenario. The experiment results in the behavior vector $\bm{p}\in \mathbb{R}^{m_A d_A m_B d_B}_+$ with entries $p(ab|xy)$ specifying the probability of outcomes $a$, $b$ when Alice measures $x$ and Bob measures $y$. We denote by ${\bm p}^A\in\mathbb{R}^{m_A d_A}_+$, ${\bm p}^B \in\mathbb{R}^{m_B d_B}_+$ the marginal vectors with entries $p_A(a|x),p_B(b|y)$ specifying the respective marginal probabilities. The set $\LC$ of behaviors $\bm p$ admitting a local-causal explanation forms a polytope, with local deterministic strategies $\bar{\bm{p}}$ as vertices satisfying $\bar{p}(ab|xy)=\delta_{a,a_x}\delta_{b,b_y}$, where $a_x$, $b_y$ are local deterministic \emph{assignments}. 

A quantum strategy is described by the three-tuple $(\hat{\rho}_{AB},\{ \hat{M}_a^x\}, \{\hat{N}_b^y\})$ entailing the shared quantum state $\hat{\rho}_{AB}$ and local quantum measurement operators $\{ \hat{M}_a^x\}, \{\hat{N}_b^y\}$ which results in a quantum behavior $\bm p$ with components,
\begin{equation}
    p(ab|xy) = \Tr \; \left(\hat{\rho}_{AB}\, \hat{M}_a^x\otimes \hat{N}_b^y\right) \qquad \forall\; a, b, x, y \;, \label{quantum_prob}
\end{equation}
We denote the convex set of quantum behaviors by $\QC$. It is well know that $\LC\subseteq \QC$ and that there exists \emph{nonlocal} quantum behaviors such that $\bm{p}\in\QC\setminus \LC$. The hyperplanes containing the facets of $\LC$ thus separate the local behaviors from those that are \emph{nonlocal}, and each of them is associated with a Bell inequality of the form,
\begin{equation}
    \beta(\bm p) := \bm\beta\cdot\bm p =\sum_{abxy}\beta_{abxy}p(ab|xy) \leq \beta_\LC, \label{general_bell_inequality}
\end{equation}

where $(\cdot)$ denotes the inner product in $\mathbb{R}^{m_A d_A m_B d_B}_+$, and the vector $\bm\beta\in \mathbb{R}^{m_A d_A m_B d_B}_+$ with the real coefficients $\beta_{abxy}$ as entries specifies the \emph{Bell functional} $\beta(\bm{p})$ and lies in the orthogonal complement of the corresponding facet. A nonlocal behavior $\bm{p} \notin \LC$ violates at least one such facet Bell inequality, such that, $\beta(\bm{p})>\beta_{\LC}$. The amount of the violation, $\beta(\bm{p})-\beta_\LC$, is related to the distance of $\bm p$ from $\LC$, and constitutes a measure of nonlocality \cite{Araujo2020bellnonlocality}. We denote the maximal quantum value of the Bell functional $\beta(\bm{p})$ by $\beta_\QC$.

In the derivation of inequality \eqref{general_bell_inequality} the detectors are assumed to be perfect. However in actual experiments detectors sometimes fail to detect an incoming system, which results in the occurrence of ``no-click" events. The most general way of accounting for a ``no-click" event is to consider it an additional outcome of the local measurements, which enlarges the considered Bell scenario. A more convenient approach, which preserves the Bell scenario, consists of locally assigning a pre-existing outcome to each ``no-click'' event \cite{Eberhard1995,miklin2022exponentially,XuLatest}. Moreover, this method is particularly well-suited for our purposes as we are interested in gauging the effect of imperfect detectors on the violation of a given facet Bell inequality.

Let $\eta_A,\eta_B \in [0,1]$ be Alice's and Bob's independent detector efficiencies. A local assignment strategy is described by a local behavior $\bm{q}\in \LC$ with $q(ab|xy)$ as entries specifying the probability with which the parties assign the outcomes $a$, $b$ to failed measurements of $x$, $y$, respectively. Then the effect of inefficient detectors on an ideal behavior $\bm{p}$ can be summarized as the following affine map,
\begin{align} \label{effectiveP}
    \tilde{\bm p} = &\eta_A\eta_B{\bm p} + \eta_A(1-\eta_B){\bm p}^A\otimes {\bm q}^B   \nonumber\\
    &+ (1-\eta_A)\eta_B {\bm q}^A \otimes {\bm p}^B + (1-\eta_A)(1-\eta_B)\bm q,
\end{align}
where $\otimes$ is the Kronecker product, $\bm{q}^A \in\mathbb{R}^{m_A d_A}_+$, $\bm{q}^B  \in\mathbb{R}^{m_B d_B}_+$ are the respective marginals of $\bm{q}$. 

Observe that in \eqref{effectiveP}, the effective behavior $\tilde{\bm{p}}$ is a convex mixture of the ideal behavior $\bm{p}$, the product behaviors  ${\bm p}^A\otimes {\bm q}^B$, ${\bm q}^A \otimes {\bm p}^B$ and the local assignment strategy $\bm{q}$. Since, ${\bm p}^A\otimes {\bm q}^B,{\bm q}^A \otimes {\bm p}^B,\bm{q} \in \LC$, locally assigning outcomes to failed measurements cannot increase the value of the Bell functional $\beta(\tilde{\bm p})$ \eqref{general_bell_inequality} beyond the local bound $\beta_\LC$. Hence, we say that a given Bell inequality \eqref{general_bell_inequality} is violated in a \emph{loophole-free} way if the effective behavior $\tilde{\bm p}$ violates it, such that $\beta(\tilde{\bm p})>\beta_\LC$. For any given $\eta_A,\eta_B$, we are interested in the maximal effective or loophole-free violation of a Bell inequality \eqref{general_bell_inequality}, 
\begin{equation} \label{lfviolation}
\beta(\tilde{\bm p})-\beta_\LC. 
\end{equation}
As $\tilde{\bm p}$ depends on the ideal quantum behavior $\bm{p}\in \QC$ and the local assignment strategy $\bm{q}\in\LC$ \eqref{effectiveP}, finding the maximal loophole-free violation \eqref{lfviolation} of a Bell inequality, requires optimizing over both $\QC$ and $\LC$. We now present a useful Lemma which resolves this optimization,
\begin{lemma}\label{tiltedTheorem}
For any specification detection efficiencies $\eta_A,\eta_B$ and any given Bell inequality $\beta(\bm p) \leq \beta_\LC,$ \eqref{general_bell_inequality}, the ideal quantum behaviors $\bm{p}\in \QC$ that yield the maximal loophole-free violation $\beta(\tilde{\bm p})-\beta_\LC$ \eqref{lfviolation}, are the ones that maximally violate a tilted Bell inequality $\bm{\beta}_{\eta_A,\eta_B}(\bm{p})\leq \beta_{\LC}(\eta_A,\eta_B)$ where $\beta_\LC(\eta_A,\eta_B)= \beta_{\LC}/{\eta_A\eta_B}-\beta(\bar{\bm q})\alpha(\eta_A)\alpha(\eta_B)$ and,
\begin{align} \label{tilted_B}
\bm{\beta}_{\eta_A,\eta_B}(\bm{p})= \bm{\beta}\cdot\bm{p} + \alpha(\eta_B)\bm{\beta}^A_{\bar{\bm q}}\cdot\bm{p}^A + \alpha(\eta_A)\bm{\beta}^B_{\bar{\bm q}}\cdot\bm{p}^B,
\end{align}
where $\alpha(\eta)=(1-\eta)/\eta$, $\bar{\bm q}$ is a deterministic assignment strategy with entries $\bar{q}(ab|xy)=\delta_{a,a_x}\delta_{b,b_y}$, and $\bm{\beta}^A_{\bar{\bm q}}$, $ \bm{\beta}^B_{\bar{\bm q}}$ represent single party Bell functionals with coefficients $\beta^A_{ax}=\sum_y\beta_{ab_yxy}$, $\beta^B_{by}=\sum_x\beta_{a_xbxy}$ as entries, respectively, such that, the effective value of the original Bell functional is $\beta(\tilde{p})=\eta_A\eta_B\left(\beta_Q(\eta_A,\eta_B)-\alpha(\eta_A)\alpha(\eta_B) \bm{\beta}\cdot \bar{\bm{q}}\right)$ where $\beta_Q(\eta_A,\eta_B)$ is the maximum quantum value of the Bell functional $\beta_{\eta_A,\eta_B}$.
\end{lemma}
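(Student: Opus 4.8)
The statement is essentially a bookkeeping identity, so the plan is to expand $\beta(\tilde{\bm p})=\bm\beta\cdot\tilde{\bm p}$ using the affine decomposition \eqref{effectiveP} and reorganize the result into the claimed form. First I would use linearity of the inner product to write
\begin{align}
\beta(\tilde{\bm p}) &= \eta_A\eta_B\,\bm\beta\cdot\bm p + \eta_A(1-\eta_B)\,\bm\beta\cdot(\bm p^A\otimes\bm q^B) \nonumber\\
&\quad + (1-\eta_A)\eta_B\,\bm\beta\cdot(\bm q^A\otimes\bm p^B) + (1-\eta_A)(1-\eta_B)\,\bm\beta\cdot\bm q .
\end{align}
The key observation is that, with $\bm p$ held fixed, every term on the right is an \emph{affine} (indeed linear) function of $\bm q$, because $\bm q\mapsto\bm q^A$, $\bm q\mapsto\bm q^B$ are the linear marginalization maps and $\bm u\mapsto\bm\beta\cdot(\bm p^A\otimes\bm u)$, $\bm v\mapsto\bm\beta\cdot(\bm v\otimes\bm p^B)$ are linear. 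Hence the maximum over the polytope $\LC$ is attained at a vertex, i.e.\ at a deterministic assignment $\bar{\bm q}$ with $\bar q(ab|xy)=\delta_{a,a_x}\delta_{b,b_y}$; since there are finitely many vertices, the joint optimization over $(\bm p,\bm q)\in\QC\times\LC$ reduces to first fixing the optimal such $\bar{\bm q}$ — which in general depends on $\eta_A,\eta_B$ — and then optimizing over $\bm p\in\QC$ alone.

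Next I would evaluate the three $\bm q$-dependent terms at a deterministic $\bar{\bm q}$. Collapsing the relevant tensor slot gives $\bm\beta\cdot(\bm p^A\otimes\bar{\bm q}^B)=\sum_{ax}\big(\sum_y\beta_{ab_yxy}\big)p_A(a|x)=\bm\beta^A_{\bar{\bm q}}\cdot\bm p^A$, and likewise $\bm\beta\cdot(\bar{\bm q}^A\otimes\bm p^B)=\bm\beta^B_{\bar{\bm q}}\cdot\bm p^B$ and $\bm\beta\cdot\bar{\bm q}=\beta(\bar{\bm q})$ — which is exactly how $\bm\beta^A_{\bar{\bm q}},\bm\beta^B_{\bar{\bm q}}$ were defined. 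Substituting and pulling out the overall factor $\eta_A\eta_B$ converts the coefficients $\eta_A(1-\eta_B)$, $(1-\eta_A)\eta_B$, $(1-\eta_A)(1-\eta_B)$ into $\alpha(\eta_B)$, $\alpha(\eta_A)$, $\alpha(\eta_A)\alpha(\eta_B)$, so that
\begin{equation}
\beta(\tilde{\bm p})=\eta_A\eta_B\big(\bm\beta_{\eta_A,\eta_B}(\bm p)+\alpha(\eta_A)\alpha(\eta_B)\,\beta(\bar{\bm q})\big),
\end{equation}
with $\bm\beta_{\eta_A,\eta_B}$ as in \eqref{tilted_B}. Subtracting $\beta_\LC$ and regrouping the additive constant yields $\beta(\tilde{\bm p})-\beta_\LC=\eta_A\eta_B\big(\bm\beta_{\eta_A,\eta_B}(\bm p)-\beta_\LC(\eta_A,\eta_B)\big)$ with $\beta_\LC(\eta_A,\eta_B)$ as stated.

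Since $\eta_A\eta_B>0$ — as is necessary for any loophole-free violation — this affine relation shows that maximizing $\beta(\tilde{\bm p})-\beta_\LC$ \eqref{lfviolation} over $\bm p\in\QC$ is equivalent to maximizing the tilted functional $\bm\beta_{\eta_A,\eta_B}$ over $\QC$, so the two problems have the same set of maximizers; this is the assertion of the lemma. Substituting $\bm\beta_{\eta_A,\eta_B}(\bm p)=\beta_Q(\eta_A,\eta_B)$ into the displayed identity then recovers the closed-form value of the original Bell functional at the optimum. Running the same algebra with $\bm p\in\LC$ — where $\tilde{\bm p}\in\LC$, hence $\beta(\tilde{\bm p})\le\beta_\LC$ — also shows $\bm\beta_{\eta_A,\eta_B}(\bm p)\le\beta_\LC(\eta_A,\eta_B)$ on $\LC$, so that $\bm\beta_{\eta_A,\eta_B}(\bm p)\le\beta_\LC(\eta_A,\eta_B)$ is a genuine Bell inequality. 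The only delicate point — though not technically hard — is the reduction to deterministic $\bar{\bm q}$ and the attendant fact that the optimal assignment, and hence the tilt itself, varies with $\eta_A,\eta_B$; the remainder is routine algebra.
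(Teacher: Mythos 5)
Your proof is correct and follows essentially the same route as the paper's: expand $\beta(\tilde{\bm p})$ by linearity from \eqref{effectiveP}, note that for fixed $\bm p$ the expression is linear in $\bm q$ so the optimum over $\LC$ sits at a deterministic vertex $\bar{\bm q}$, collapse the tensor slots to obtain the single-party functionals, and divide by $\eta_A\eta_B$ to expose the tilted functional (the paper merely prepends a derivation of \eqref{effectiveP} itself from the composition of the no-click and assignment maps, which you reasonably take as given). Note that your identity $\beta(\tilde{\bm p})=\eta_A\eta_B\bigl(\bm\beta_{\eta_A,\eta_B}(\bm p)+\alpha(\eta_A)\alpha(\eta_B)\beta(\bar{\bm q})\bigr)$ carries a plus sign where the lemma's final clause has a minus; your sign is the one consistent with the stated bound $\beta_\LC(\eta_A,\eta_B)=\beta_\LC/(\eta_A\eta_B)-\alpha(\eta_A)\alpha(\eta_B)\beta(\bar{\bm q})$, so the lemma statement appears to contain a typo rather than your derivation an error.
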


The proof has been deferred to the Supplementary Information for brevity. As a consequence of Lemma \ref{tiltedTheorem}, finding the maximal loophole-free violation \eqref{lfviolation} of a Bell inequality amounts to finding the maximum quantum values of a finite set of tilted Bell functionals of the form \eqref{tilted_B} associated with the finite set of deterministic vertices of $\LC$. In the next section, we exemplify the application of Lemma \ref{tiltedTheorem} and find the maximum loophole-free nonlocality in the simplest Bell scenario for any specification of detection efficiencies.

\subsection{Maximal loophole-free nonlocality in the CHSH scenario} \label{sec_3}

We now consider the simplest bipartite Bell scenario, in which Alice and Bob perform one of the two distinct measurements, $x, y \in\{0, 1\}$ and obtain binary outcomes, $a, b\in\{-1, +1\}$, respectively. It is convenient in this scenario to introduce \textit{correlators}, $\mbraket{A_x B_y}:=\sum_{ab} ab\, p(ab|xy)$, and marginals $\mbraket{A_x}:=\sum_{a} a\, p_A(a|x)$, $\mbraket{B_y}:=\sum_{b} b\, p_B(b|y)$
\cite{Brunner2014}. In a quantum strategy these averages correspond to the expectation values of binary observables, $\hat{A}_x = \hat{M}^x_{+1} - \hat{M}^x_{-1}$ and $\hat{B}_y = \hat{N}^y_{+1} - \hat{N}^y_{-1}$ with respect to a shared quantum state $\hat{\rho}_{AB}$. Then the nonlocality of a given behavior $\bm{p}$ can be witnessed by a violation of the CHSH inequality, 
\begin{equation}
    C(\bm p) = \sum_{x, y}(-1)^{x\cdot y}\mbraket{A_x B_y}\leq 2, \label{CHSH_ineq}
\end{equation}
which up to relabeling of measurements and outcomes is known to be the only tight and complete Bell inequality in this scenario \cite{Gigena2022,Brunner2014review}. Hence, the violation of the CHSH inequality $C(\bm p)-2$ forms our measure of nonlocality. 

We are interested in the maximal loophole-free violation of the CHSH inequality $C(\tilde{\bm p})-2$ for any specification of detection efficiencies $\eta_A,\eta_B$. Therefore, we invoke Lemma \ref{tiltedTheorem} with a  
deterministic assignment strategy, wherein $q(ab|xy)=\delta_{a,+1}\delta_{b,+1}$ for all $x,y$, to retrieve the following doubly-tilted CHSH inequality,

\begin{align}
    C_{\eta_A\eta_B}(\bm p) &= C({\bm p}) + \frac{2}{\eta_B}(1-\eta_B) \mbraket{A_0} + \frac{2}{\eta_A}(1-\eta_A) \mbraket{B_0}\nonumber\\
    &\leq  2 \left[ \frac{1}{\eta_A} + \frac{1}{\eta_B} -1 \right]= c_\LC(\eta_A,\eta_B). \label{tilted_CHSH}
\end{align}
 In the Supplementary Information we demonstrate that the assignment strategy considered above is optimal for all $\eta_A,\eta_B$. Consequently, for any given $\eta_A,\eta_B$, the quantum strategies which maximally violate the doubly-tilted CHSH inequality \eqref{tilted_CHSH} yield the maximal loophole-free violation of the CHSH inequality \eqref{CHSH_ineq}. However, not all combinations $\eta_A,\eta_B$ allow for such a violation. In particular, a quantum loophole-free violation of the CHSH inequality \eqref{CHSH_ineq} is not possible if the detection efficiencies $\eta_A,\eta_B\in[0,1]$ fail to satisfy \cite{Massar2003,Larsson2004,Cope2019,Lobo2024certifyinglongrange}, 
\begin{equation}
     \eta_B > \frac{\eta_A}{3\eta_A-1}.
     \label{critical_efficiency}
\end{equation}

Therefore, \eqref{critical_efficiency} provides lower bounds for Bob's critical detection efficiency $\eta^*_B\geq \frac{\eta_A}{3\eta_A-1}$ given Alice's detection efficiency $\eta_A\in (\frac{1}{2},1]$, effectively defining the region \eqref{critical_efficiency} of $\eta_A,\eta_B$, where we look for the maximum quantum violation of doubly-tilted CHSH inequalities, $c_{\QC}(\eta_A,\eta_B)$.  

In the following sections, we find the exact value of $c_{\QC}(\eta_A,\eta_B)$ along with the optimal quantum strategies. In FIG. \ref{fig:anufig2}, we plot the consequent maximal loophole-free value of the CHSH functional $C(\tilde{\bm p})=\eta_A\eta_B c_{\QC}(\eta_A,\eta_B)-2(1-\eta_A)(1-\eta_B)/\eta_A\eta_B$.

\begin{figure} 
    \centering
     \includegraphics[width=1\linewidth]{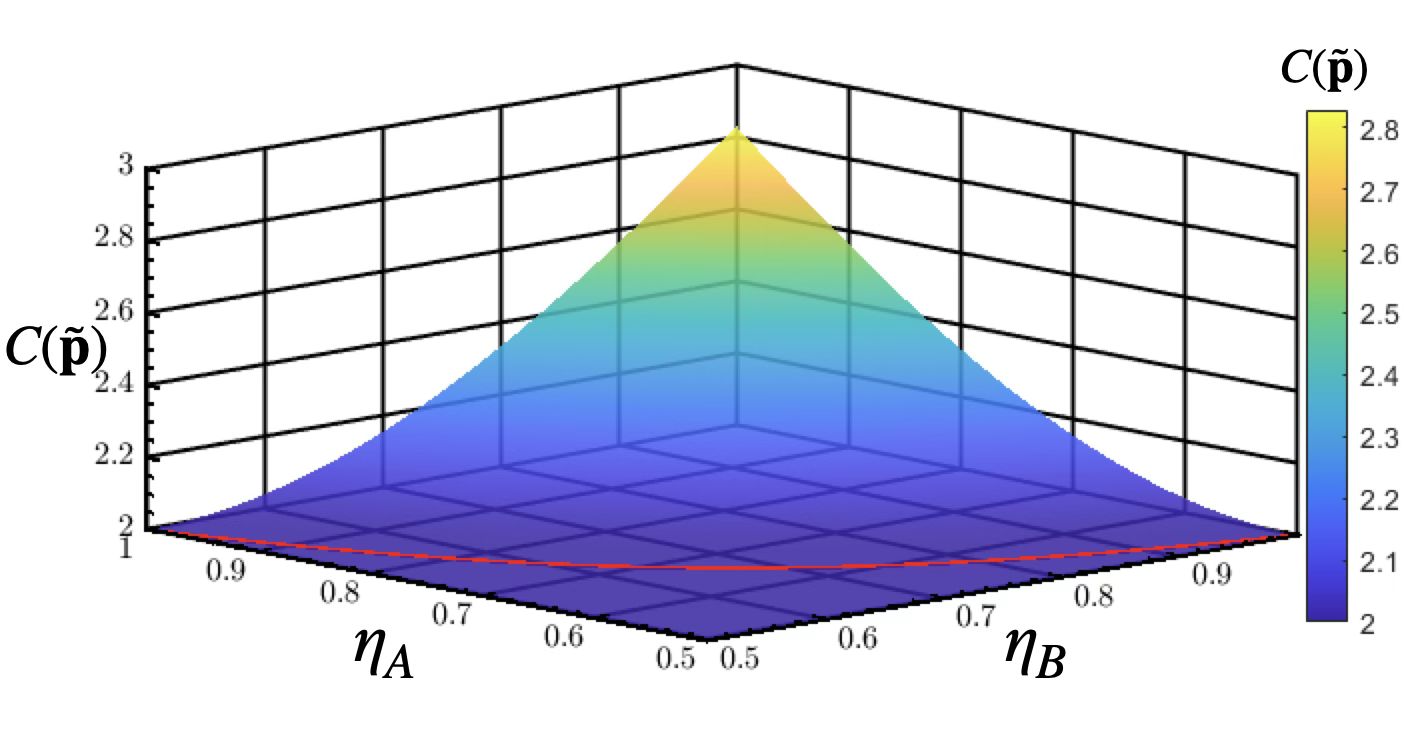}
    \caption{\emph{Maximum loophole-free violation of the CHSH inequality:---} A plot of the maximum loophole-free value of the CHSH functional, $C(\tilde{\bm{p}})=\eta_A\eta_Bc_{\QC}(\eta_A,\eta_B)+(1-\eta_A)(1-\eta_B)2$, against detection efficiencies $\eta_A,\eta_B\in[\frac{1}{2},1]$, where we used the analytical expression for maximum quantum violation of the doubly-tilted CHSH inequality \eqref{tilted_CHSH}, $c_{\QC}(\eta_A,\eta_B)$, derived in Section \ref{deriveAnalytic}.
    The solid red line represents Bob's critical detection efficiency $\eta^*_B=\frac{\eta_A}{3\eta_A-1}$ \eqref{critical_efficiency}, below which a loophole-free quantum violation of the CHSH inequality is not possible \cite{Massar2003,Larsson2004,Cope2019,Lobo2024certifyinglongrange}.
    }
    \label{fig:anufig2}
\end{figure}

\subsection{MAXIMAL VIOLATION OF DOUBLY-TILTED CHSH INEQUALITIES AND SELF-TESTING}

In this section we address the problem of computing the maximal violation of the inequalities of the form \eqref{tilted_CHSH}, which form a two-parameter family, and the quantum strategies attaining it. A one-parameter subset within this family is that corresponding to either $\eta_A=1$ or $\eta_B=1$, for which the solution is already known. We start by revisiting these results before presenting the solution for the general case which, remarkably, is far from being a straightforward generalization of the first. %We close the section discussing this feature.
\subsection{One inefficient detector} \label{oneInefDetect}
Let us consider an \emph{ideal} scenario wherein Alice has access to perfect detectors such that $\eta_A=1$, while Bob's detectors are imperfect and click with efficiency $\eta_B$. Consequently, we retrieve the following family of tilted CHSH inequalities from \eqref{tilted_CHSH},
\begin{equation}
    C_{\alpha}(\bm p) = \sum_{x, y}(-1)^{x\cdot y}\mbraket{A_x B_y} + \alpha \mbraket{A_0} \leq  2+\alpha, \label{tiltedasym}
\end{equation}
where the tilting parameter $\alpha = \frac{2}{\eta_B}(1-\eta_B)$ is determined by the detection efficiency on Bob's side. First, a quantum loophole-free violation of the CHSH inequality \eqref{CHSH_ineq} requires $\eta_B\in(\frac{1}{2},1]$, which restricts the tilting parameter to $\alpha\in[0,2)$.  

The maximum quantum value of the Bell functional in ~\eqref{tiltedasym} is  $c_{\QC}(\alpha)= \sqrt{8 + 2\alpha^2 }$ \cite{acin2012versus}. Hence, the maximum loophole-free violation of CHSH inequality when $\eta_A=1$ is $2\sqrt{2}\sqrt{\eta^2_B+(1-\eta_B)^2}$. Recall that, in terms of the state and measurements $(\ket{\psi'}, \{\hat{A}'_x\}, \{\hat{B}'_y\})$ of an optimal quantum strategy, we can write the maximal quantum value of the tilted CHSH functional as $c_{\QC}(\alpha)=\braketm{\psi'}{\hat{C}_{\alpha}}{\psi'}$, where $\hat{C}_{\alpha}$ is the tilted CHSH Bell operator, given by
\begin{equation}
    \hat{C}_{\alpha} = \sum_{x, y}(-1)^{x\cdot y} \hat{A}_x\otimes \hat{B}_y + \alpha \hat{A}_0.
\end{equation}
In Ref.~\cite{tiltedasym} it is shown that $\hat{C}_{\alpha}$ admits an SOS decompositions of the form,
\begin{equation}
    c_{\QC}(\alpha)\hat{\mathbb{1}} - \hat{C}_{\alpha} = \sum_i \hat{P}_i^\dag \hat{P}_i, \label{SOS}   
\end{equation}
in terms of polynomials $\hat{P}_i$ in the operators $\{\hat{\mathbb{1}}, \{\hat{A}'_x\}, \{\hat{B}'_y\}, \{\hat{A}'_x\hat{B}'_y\}\}$. These decompositions are then used to prove that $C_{\alpha}(\bm p)=c_{\QC}(\alpha)$ self-tests the optimal strategy $(\ket{\psi}, \{\hat{A}_x\}, \{\hat{B}_y\})$ \cite{tiltedasym, Supic2020}, where,
\begin{equation} 
    \begin{split}
        \ket{\psi} &= \cos{\theta}\ket{00} + \sin{\theta}\ket{11}, \\ 
        \hat{A}_0 &= \sigma_z \qquad  \hat{B}_0 = \cos{\mu}\,\sigma_z + \sin{\mu}\,\sigma_x \\
        \hat{A}_1 &= \sigma_x \qquad  \hat{B}_1 = \cos{\mu}\,\sigma_z - \sin{\mu}\,\sigma_x, \label{measurements_asymmetric}
    \end{split} 
\end{equation}
with $\alpha = 2/\sqrt{1+2\tan^2{2\theta}}$, $\tan(\mu) = \sin(2\theta)$ and $\sigma_{x(z)}$ denotes the $x (z)$ Pauli matrix. 

The proof builds upon the observation that for \emph{any} optimal quantum strategy $(\ket{\psi'}, \{\hat{A}'_x\}, \{\hat{B}'_y\})$, the SOS decomposition \eqref{SOS} implies that $\ket{\psi'}$ must belong to the null space of the operators $\hat{P}_i$, i.e., it must satisfy the conditions, $\hat{P}_i\ket{\psi'}=0$ for all $i$. From these conditions it is possible to infer the existence of operators $\{\hat{Z}_A, \hat{X}_A, \hat{Z}_B, \hat{X}_B \}$ \cite{tiltedasym}, such that,
\begin{equation}
\begin{split}
    \hat{Z}_A\ket{\psi'} &= \hat{Z}_B\ket{\psi'} \\
    \sin\theta \hat{X}_A(\hat{\mathbb{1}}+\hat{Z}_B)\ket{\psi'} &= \cos\theta \hat{X}_A(\hat{\mathbb{1}}-\hat{Z}_A)\ket{\psi'}. \label{isometry_conditions}
\end{split}
\end{equation}
The conditions \eqref{isometry_conditions} ensure the existence of \emph{local isometries}, $\Phi_A$ and $\Phi_B$, mapping any optimal strategy $(\ket{\psi'}, \{\hat{A}'_x\}, \{\hat{B}'_y\})$ to the reference strategy $(\ket{\psi}, \{\hat{A}_x\}, \{\hat{B}_y\})$ in \eqref{measurements_asymmetric}, that is,
\begin{equation}
    \begin{split}
        \Phi_A\otimes\Phi_B (\ket{\psi'}) &= \ket{\psi}\otimes\ket{\text{junk}} \\
        \Phi_A\otimes\Phi_B (\hat{A}'_x\otimes \hat{B}'_y \ket{\psi'}) &= \hat{A}_x\otimes \hat{B}_y \ket{\psi}\otimes \ket{\text{junk}}, \label{isometries}
    \end{split}
\end{equation}
where $\ket{\text{junk}}$ represents the arbitrary state of additional degrees of freedom on the which the measurements act trivially. Thus the optimal quantum strategy is unique up to local isometries. We note that while the proof of relations in Eq.~\eqref{isometry_conditions} requires an ideal behavior, \cite{tiltedasym} demonstrates that self-testing can be made \emph{robust}. We recall that a Bell expression provides a robust self-test for a given quantum strategy, say $R$ if, in a noise-tolerant manner, the expected value \( \beta(\bm{p}) \) close to the maximum \( \beta_{\QC} \) \emph{consistently} corresponds to a strategy $\tilde{R}$ in a close neighborhood of the reference strategy $R$, up to local isometries.

% Let us now proceed to the unsolved general case of inefficient detectors for both parties.

\subsection{Two inefficient detectors} \label{deriveAnalytic}

We now consider the most generic experimental setting, wherein the detectors of both parties may be imperfect and click with efficiencies $\eta_A,\eta_B \in [0,1]$. We rewrite for convenience, the doubly-tilted CHSH inequalities \eqref{tilted_CHSH} for the general case as,
\begin{equation}
    C_{\alpha,\beta}(\bm p) = \sum_{x, y}(-1)^{x\cdot y}\mbraket{A_x B_y} + \alpha \mbraket{A_0} + \beta \mbraket{B_0} \leq  2+\alpha + \beta, \label{general_tilted}
\end{equation}
where the tilting parameter $\alpha = \frac{2}{\eta_B}(1-\eta_B)$ for Alice's term $\mbraket{A_0}$ depends on the Bob's detection efficiency $\eta_B$, while Bob's tilting parameter $\beta = \frac{2}{\eta_A}(1-\eta_A)$ is determined by Alice's efficiency $\eta_A$. Consequently, the boundary of the region of interest \eqref{critical_efficiency} translates to $0<\alpha + \beta < 2$ in terms of the tilting parameters. We are interested in finding the maximal quantum value $c_{\QC}(\alpha,\beta)$ of the doubly-tilted CHSH functional in \eqref{general_tilted}, as well as a quantum strategy attaining it, as a function of the tilting parameters $\alpha, \beta$. However, the problem is intractable via the analytical techniques from \cite{tiltedasym} described above. We discuss this feature in more detail in Section \ref{SOSintract}.

% In what follows we first describe in more detail the difficulties that arise when trying to obtain a tight SOS decomposition for the Bell operator on the left of \eqref{general_tilted}, and then we show that obtaining an analytical solution is nonetheless possible by a different technique. 

%Lacking tight analytical SOS decompositions 
For the general doubly-tilted CHSH inequalities \eqref{general_tilted}, we consider here an alternative approach. Since in this scenario both Alice and Bob perform two dichotomic measurements, Jordan's lemma ensures the existence of local bases in which the two observables of each party take a jointly block diagonal form, with blocks of dimension at most two \cite{masanes2005extremal, supic2020selftestingof, Panwar2023}. It follows then that the ensuing Bell operator associated with the Bell functional in \eqref{general_tilted} also takes a block diagonal form. Hence, the quantum value can always be expressed as a convex combination of two-qubit values. Thus, the maximal quantum value of the Bell functional is always achievable with a two-qubit strategy. 

%In what follows, we present a solution to the problem of determining the maximal two-qubit value for the doubly tilted CHSH functional in \eqref{general_tilted} and the optimal quantum strategy attaining it. Furthermore, we show that the obtained two-qubit strategy is self-tested by the maximal quantum value of the doubly tilted CHSH functional.

%\subsubsection{Two-qubit quantum value and optimal realization for $\alpha=\beta$.}

Consequently, without loss of generality we can take the local observables to be,
\begin{subequations}\label{qubitParam}
\begin{align} 
    \hat{A}_0&= {\sigma}_{Z} \ , \  \hat{A}_1=c_A\,{\sigma} _{Z}+s_A\,{\sigma}_{X} \\
     \hat{B}_0&= {\sigma}_{Z} \ , \  \hat{B}_1=c_B\,{\sigma}_{Z}+s_B\,{\sigma}_{X},
\end{align}
\end{subequations}
where $c_A = \cos\theta_A$, $s_A = \sin\theta_A$, $c_B = \cos\theta_B$ and $s_B = \sin\theta_B$, and $\theta_A,\theta_B\in(0,\frac{\pi}{2}]$ are the angles between Alice's and Bob's measurements, respectively. Finally, the doubly-tilted CHSH Bell operator can be expressed as,
\begin{equation}
    \hat{C}_{\alpha,\beta}=\sum_{x, y}(-1)^{x\cdot y}\hat{A}_x\otimes \hat{B}_y + \alpha\, \hat{A}_0\otimes \hat{\mathbb{1}}+ \beta\, \hat{\mathbb{1}}\otimes \hat{B}_0. \label{sym_tilted_functional}
\end{equation}
We are now prepared to present our main result, namely, the maximal quantum violation of the doubly-tilted CHSH inequalities \eqref{general_tilted} which self-tests the optimal quantum state and measurements attaining it. Here, we consider the symmetric case with $\alpha=\beta$.

%In the main text we restrict to the symmetric case with $\alpha=\beta$, while the general case with $\alpha\neq\beta$ is deferred to the Supplementary Information.

% Consequently, the maximum quantum violation ${c}_{\QC}(\alpha,\beta)$ self-tests a quantum strategy if, up to local two-dimensional unitaries, the optimal two-qubit strategy is unique. Now we are prepared to present our main result, namely, the \emph{analytical} self-testing statements for the symmetrically ($\alpha=\beta$) tilted CHSH inequality. We have differed the self-testing statements for the general case of $\alpha\neq\beta$ to the supplementary material (Sec. \ref{alpanotbeta}) for brevity.

\begin{theorem}\label{selfTestSymmetric}[\emph{Self-testing with symmetrically tilted CHSH inequalities}]
The maximal quantum value $c_{\mathcal{Q}}(\alpha,\alpha)$ of the symmetrically $(\alpha=\beta)$ tilted CHSH functional in \eqref{general_tilted} for $\alpha\in[0,1)$ is the largest real root of the degree 4 polynomial,
\begin{multline}
     f(\lambda) = \lambda^4  + (4- \alpha^2)\lambda^3  + \left(\frac{11}{4} \alpha^4 -12 \alpha^2 - 4\right) \lambda^2   \\
  +(2 \alpha^6 - \alpha^4 - 20 \alpha^2 -32 )\lambda  + 5 \alpha^6 - 21 \alpha^4 + 16 \alpha^2 - 32.
  \label{analySol}
 \end{multline}
 Moreover, $C_{\alpha,\alpha}(\bm{p})=c_{\mathcal{Q}}(\alpha,\alpha)$ \emph{self-tests} a two-qubit quantum strategy with optimal $(*)$ local observables of the form \eqref{qubitParam}, such that the optimal cosines are equal, i.e., ${c}^*(\alpha)=c^*_{A}(\alpha)=c^*_{B}(\alpha)$,
and satisfy,
 \begin{equation}
    {c}^*(\alpha) = \frac{1}{8}\left[ 3\alpha^2 - 4 + \sqrt{16+9\alpha^4+8\alpha^2(2c_{\QC}(\alpha,\alpha)-1)} \right]. \label{beta_c_relation}
\end{equation}
\end{theorem}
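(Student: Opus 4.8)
The plan is to combine Jordan's lemma (already invoked to reduce to the two-qubit parametrization \eqref{qubitParam}) with a Gröbner-basis elimination over the polynomial system of optimality conditions. First I would fix the two-qubit ansatz: observables as in \eqref{qubitParam} with cosines $c_A,c_B$, and a general two-qubit pure state $\ket{\psi}$ (which, up to local unitaries commuting with $\hat A_0=\hat B_0=\sigma_Z$, I can take in Schmidt form $\cos\vartheta\ket{00}+\sin\vartheta\ket{11}$, or keep the four real amplitudes and impose normalization). Writing $\lambda=\braketm{\psi}{\hat C_{\alpha,\alpha}}{\psi}$, the maximization over $\ket{\psi}$ is an eigenvalue problem, so $\lambda$ is the largest eigenvalue of the $4\times4$ Bell operator $\hat C_{\alpha,\alpha}$, whose characteristic polynomial depends on $c_A,c_B,\alpha$. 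The stationarity of $\lambda$ with respect to $c_A$ and $c_B$ gives two more polynomial equations $\partial\lambda/\partial\theta_A=\partial\lambda/\partial\theta_B=0$. I would then treat $\{\,\det(\hat C_{\alpha,\alpha}-\lambda\mathbb 1)=0,\ \partial_{c_A}(\cdots)=0,\ \partial_{c_B}(\cdots)=0\,\}$ as a system in $\mathbb{Q}(\alpha)[\lambda,c_A,c_B]$ and compute a Gröbner basis under an elimination order that eliminates $c_A$ and $c_B$, yielding a univariate polynomial in $\lambda$ whose relevant factor is exactly $f(\lambda)$ in \eqref{analySol}; identifying $c_Q(\alpha,\alpha)$ as its largest real root (on the range $\alpha\in[0,1)$) is then a matter of checking the root count and that the other branches correspond to lower critical values or unphysical observables. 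The relation \eqref{beta_c_relation} comes out of back-substituting $\lambda=c_Q(\alpha,\alpha)$ into the symmetric stationarity locus $c_A=c_B=:c$ and solving the resulting quadratic in $c$, keeping the root consistent with $\theta\in(0,\pi/2]$.

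For the self-testing part I would argue in two stages. Stage one is rigidity at the level of the Jordan blocks: I must show that at the optimum every nontrivial qubit block carries the \emph{same} parameters, i.e. $c_A=c_B=c^*(\alpha)$ and the same $\vartheta$, so that the block-diagonal optimal strategy is a direct sum of copies of a single canonical two-qubit strategy (plus possibly a trivial block on which the Bell operator is $\le c_Q$ strictly, which then contributes nothing to an optimal state). This follows because the elimination shows the optimal $(\lambda,c_A,c_B)$-locus is a single isolated point (for generic $\alpha$) once we impose $\lambda=c_Q$, forcing uniqueness of the block data; the degenerate cases (boundary $\alpha\to0$, where one recovers CHSH, and any $\alpha$ where $f$ has a repeated largest root) need a separate, short argument. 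Stage two is the standard swap-isometry construction: from $\hat A_0,\hat A_1$ on Alice and $\hat B_0,\hat B_1$ on Bob one builds the regularized Pauli-like operators $\hat Z_{A},\hat X_{A},\hat Z_{B},\hat X_{B}$ (as in \eqref{isometry_conditions}), checks the two intertwining relations on $\ket{\psi'}$ using the optimality conditions $\hat P_i\ket{\psi'}=0$ — here the $\hat P_i$ are read off from the Gröbner/SOS certificate rather than guessed — and concludes existence of local isometries $\Phi_A\otimes\Phi_B$ with the properties \eqref{isometries}. Robustness is then obtained numerically, by perturbing the optimality conditions and tracking the operator-norm distance to the reference strategy, exactly as in the $\eta_A=1$ case of \cite{tiltedasym}.

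The main obstacle I anticipate is controlling the Gröbner-basis elimination and, more importantly, \emph{selecting the physically correct branch}. The elimination ideal will in general factor, and only one factor is $f(\lambda)$; I will need to certify (i) that $c_Q(\alpha,\alpha)$ is the \emph{largest} real root of $f$ and not of a competing factor, and (ii) that along this branch the recovered $c^*(\alpha)$ lies in $[-1,1]$ with $\theta\in(0,\pi/2]$ and the optimizing eigenvector is the top eigenvector of $\hat C_{\alpha,\alpha}$ (not a saddle). A clean way to discharge (i)--(ii) is to exhibit an explicit SOS decomposition $c_Q(\alpha,\alpha)\mathbb 1-\hat C_{\alpha,\alpha}=\sum_i\hat P_i^\dagger\hat P_i$ whose coefficients are rational functions of $\alpha$ and $c_Q$; this simultaneously upper-bounds the quantum value (proving optimality of the two-qubit strategy beyond the Jordan-block reduction, with no dimension assumption) and furnishes the $\hat P_i$ needed for the isometry argument. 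Constructing such an SOS certificate symbolically across the whole family $\alpha\in[0,1)$ — given that \eqref{analySol} is only degree four but its roots are genuinely quartic irrationalities — is the delicate computational step; if a fully symbolic SOS proves unwieldy, the fallback is to keep the Jordan-block reduction as the optimality proof and use the Gröbner certificate purely for uniqueness of the block data.
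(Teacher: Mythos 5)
Your proposal follows essentially the same route as the paper: Jordan's lemma to reduce to the two-qubit parametrization, the characteristic polynomial of the $4\times4$ Bell operator, Lagrange-multiplier stationarity conditions, Gr\"obner-basis elimination to obtain the univariate factor $f(\lambda)$, and back-substitution into the symmetric stationarity locus $c_A=c_B$ to get the quadratic yielding \eqref{beta_c_relation}; the self-testing conclusion likewise rests on the optimal block data being an isolated point together with nondegeneracy of the top eigenvalue. The one caveat is your proposed ``clean way'' of certifying optimality via an explicit SOS decomposition $c_{\QC}(\alpha,\alpha)\hat{\mathbb{1}}-\hat{C}_{\alpha,\alpha}=\sum_i\hat{P}_i^\dagger\hat{P}_i$: the paper shows this is precisely what fails here (even NPA level $10$, dual to bounded-degree SOS, does not saturate $c_{\QC}$ near $\alpha\to1$), so you must rely on your stated fallback --- the Jordan-block reduction for optimality and the Gr\"obner certificate for uniqueness --- which is exactly what the paper does.
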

\begin{proof}
The proof relies on Lagrange multipliers to recast the problem as a system of nonlinear polynomial equations, which is then solved via elimination using Gröbner basis (see Appendix A of \cite{Lee2017}). The elimination step was carried out in Mathematica; for the convenience of the reader we provide a Mathematica notebook containing the proof \cite{scalaSelfTesting}.

Consider the parametrization presented in \eqref{qubitParam} and assume $\alpha=\beta$. Then the Bell operator in \eqref{sym_tilted_functional} has the following matrix representation
\begin{equation}\label{qubit_bell_matrix}
       \hat{C}_{\alpha,\alpha} = \left(
\begin{array}{cccc}
 \omega+2\alpha & s_B-c_A s_B & s_A-c_B s_A & -s_A s_B \\
 s_B-c_A s_B & -\omega & -s_A s_B & (c_B-1) s_A \\
 s_A-c_B s_A & -s_A s_B & -\omega & (c_A-1) s_B \\
 -s_A s_B & (c_B-1) s_A & (c_A-1) s_B & \omega-2\alpha \\
\end{array}
\right)
\end{equation}
where $\omega= 1+c_A+c_B-c_Bc_A$. The characteristic polynomial of \eqref{qubit_bell_matrix} has the expression,
\begin{multline}\label{characteristic_poly}
  q(\lambda, c_A, c_B, \alpha) = \lambda^4 - (4\alpha^2 + 8)\lambda^2 \\
    + 8\alpha^2(c_Ac_B - c_A - c_B -1)\lambda \\
    + 8[2(c_A^2+c_B^2 - c_A^2c_B^2) \\
    + \alpha^2(c_A^2c_B + c_Ac_B^2 - c_A^2 - c_B^2 - c_A - c_B)].
\end{multline} 
%If the observables in \eqref{qubitParam} are optimal, the largest eigenvalue of this Bell operator gives the functional quantum value. Thus $c_\QC(\alpha, \alpha)$ is the largest root of the polynomial

%which is the characteristic polynomial of \eqref{qubit_bell_matrix}. Note that parameters $s_A,s_B$ have been rewritten in terms of $c_A,c_B$, so that \eqref{characteristic_poly} can be thought of as polynomial function of the cosines, instead of a non-polynomial function of the angles.
%If the optimal values of $\theta_A, \theta_B$ are not known, $c_\QC(\alpha, \alpha)$ is still given by the largest root of \eqref{characteristic_poly}, but now maximized over parameters $c_A$ and $c_B$. instead of a non-polynomial function of the angles $\theta_A,\theta_B$ 
Note that parameters $s_A,s_B$ have been rewritten in terms of $c_A,c_B$, so that \eqref{characteristic_poly} is a polynomial function of the cosines. For the optimal quantum strategy, $c_A=c^*_A(\alpha)$, $c_B=c^*_B(\alpha)$, the maximal quantum value $c_{\QC}(\alpha,\alpha)$ corresponds to the maximum eigenvalue $\lambda^*(\alpha)$ of \eqref{qubit_bell_matrix}. Hence, to find the maximal quantum value $c_\QC(\alpha, \alpha)$ we need to maximize the largest root of \eqref{characteristic_poly} over the parameters $c_A,c_B\in[0,1)$. 

The Lagrangian for this optimization problem is
\begin{equation}
    L = \lambda + s\cdot q(\lambda, c_A, c_B, \alpha),
\end{equation}
where $s$ is a Lagrange multiplier. A stationary point of $L$ satisfies the conditions $\partial_{\lambda}L = \partial_{c_A}L = \partial_{c_B} L = \partial_s L = 0$, and is therefore a solution of the following polynomial equations,
\begin{subequations}\label{poly_conditions0}
\begin{gather} \label{poly_conditions1}
        \partial_{\lambda}L = 0, \\ \label{poly_conditions2}
        \partial_{c_A}q(\lambda, c_A, c_B, \alpha) = 0, \\ \label{poly_conditions3}
        \partial_{c_B}q(\lambda, c_A, c_B, \alpha) = 0, \\ \label{poly_conditions4}
        q(\lambda, c_A, c_B, \alpha) = 0.
\end{gather}
\end{subequations}
The cosines ${c}_A^*(\alpha),{c}_B^*(\alpha)$ parametrizing the optimal measurements, and the maximal quantum value $c_{\QC}(\alpha,\alpha)={\lambda}^*(\alpha)$, are thus common roots of the polynomials $\partial_{\lambda}L,\, \partial_{c_A}q,\, \partial_{c_B}q$, and $q$ in \eqref{poly_conditions0}. These polynomials define a ring, for which a Gröbner basis \cite{Lee2017} is computed, which in turn is used to eliminate $s, c_A$ and $c_B$. 

The elimination procedure results in a degree $6$ polynomial over $\lambda$. Two out of the six roots, $2(1+\alpha)$ and $2(1-\alpha)$, coincide with and remain below the classical bound for all $\alpha\in[0,1]$, respectively. Taking the quotient of the degree $6$ polynomial with respect to the product of these two roots results in $f(\lambda)$ in \eqref{analySol}.

To derive the optimal cosines $c_A^*(\alpha)$, $c_B^*(\alpha)$, we consider the equations \eqref{poly_conditions2}, \eqref{poly_conditions3}. Since, $c_B=1$, $c_A=1$ correspond to compatible measurements, we divide \eqref{poly_conditions2}, \eqref{poly_conditions3} by their respective factors $8(1-c_A)$, $8(1-c_B)$, to retrieve the following equations,
\begin{subequations}\label{poly_conditions}
\begin{gather} \label{poly_conditions23transformed2}
        (\alpha^2-4c_A)(1+c_B)+2\alpha^2 c_A + \alpha^2\lambda=0, \\ \label{poly_conditions23transformed3}
        (\alpha^2-4c_B)(1+c_A)+2\alpha^2 c_B + \alpha^2\lambda=0.  
\end{gather}
\end{subequations}
Taking the difference of \eqref{poly_conditions23transformed2} and \eqref{poly_conditions23transformed3} to eliminate $\lambda$, we retrieve the condition, $\left(\alpha ^2-4\right) (c_A-c_B)=0$. Since $\alpha\in[0,1)$, we conclude that the optimal cosines must be equal, i.e., $c_A^*(\alpha)=c_B^*(\alpha)=c^*(\alpha)$. Plugging $c_A=c_B=c$ back into \eqref{poly_conditions23transformed2}, we find that the optimal cosine $c^*(\alpha)$ must be a root of the following degree $2$ polynomial,  
\begin{equation} \label{finalPoly}
    h(c)=4c^2 + (4-3\alpha^2)c - \alpha^2(1+\lambda).
\end{equation}
We find that, for all $\alpha\in[0,1)$ and $\lambda=\lambda^*(\alpha)=c_{\QC}(\alpha,\alpha)$, the optimal cosine $c^*(\alpha)$ corresponds to the largest root of \eqref{finalPoly}, given by \eqref{beta_c_relation}, since the other root $c'=c_A=c_B\notin [0,1)$. Since the optimal cosine $c^*(\alpha
)$ \eqref{beta_c_relation} is uniquely determined by $c_{\QC}(\alpha,\alpha)$, we conclude that the optimal measurements are self-tested by the optimal quantum value $C_{\alpha,\alpha}(\bm{p})=c_{\QC}(\alpha,\alpha)$. Moreover, we find that the maximum eigenvalue of \eqref{qubit_bell_matrix} with optimal settings $c_A=c_B=c^*(\alpha)$ is nondegenerate (since no other eigenvalue violates the local bound $2(1+\alpha)$) for all $\alpha\in[0,1)$, which implies that $C_{\alpha,\alpha}(\bm{p})=c_{\QC}(\alpha,\alpha)$ also self-tests the optimal two-qubit non-maximally entangled state specified by the eigenvector associated with the maximum eigenvalue.  

\end{proof}
%In FIG.~\ref{fig:SelfTestTheState}, we plot the optimal 
%Choosing parametrization \eqref{qubitParam} simplifies the matrix representation of the Bell operator, but these local bases do not coincide with the Schmidt basis of the optimal state, the expression for which turns out to be too involved to be presented here. Nonetheless, we can use the results obtained via software to compute its Schmidt coefficients as a function of the tilting parameter $\alpha$. We present the results of this calculation in FIG.~\ref{fig:SelfTestTheState}. We observe that as the marginal contribution to the functional increases with alpha, the entanglement in the optimal states monotonically decreases, as already observed in previous works \cite{Vidick2011, Gigena2022}.
In FIG. \ref{fig:SelfTestingStatesAndMeas} we the plot the optimal cosine $c_A=c_B=c^*(\alpha)$ parametrizing the optimal measurements \eqref{qubitParam} and the Schimdt coefficient $\xi^*$ of the optimal partially entangled two-qubit state against $\alpha\in[0,1)$ self-tested by $C_{\alpha,\alpha}(\bm{p})c_{\QC}(\alpha,\alpha)$. The analogous self-testing statements for the general case of $\alpha\neq\beta$ are contained in Theorem 2, which, along with the analogous proof, has been deferred to the Supplementary Information for brevity. In particular, in contrast to the symmetric ($\alpha=\beta$) case for which a closed-form solution is presented in the Mathematica notebook \cite{scalaSelfTesting}, the maximal quantum value $c_{\QC}(\alpha,\beta)$ in the general case ($\alpha\neq \beta$) corresponds to the largest real root of a degree 6 polynomial \eqref{analySolgen}, which can be obtained numerically to any desired precision.

\begin{figure}[h] 
    \centering
     \includegraphics[width=\linewidth]{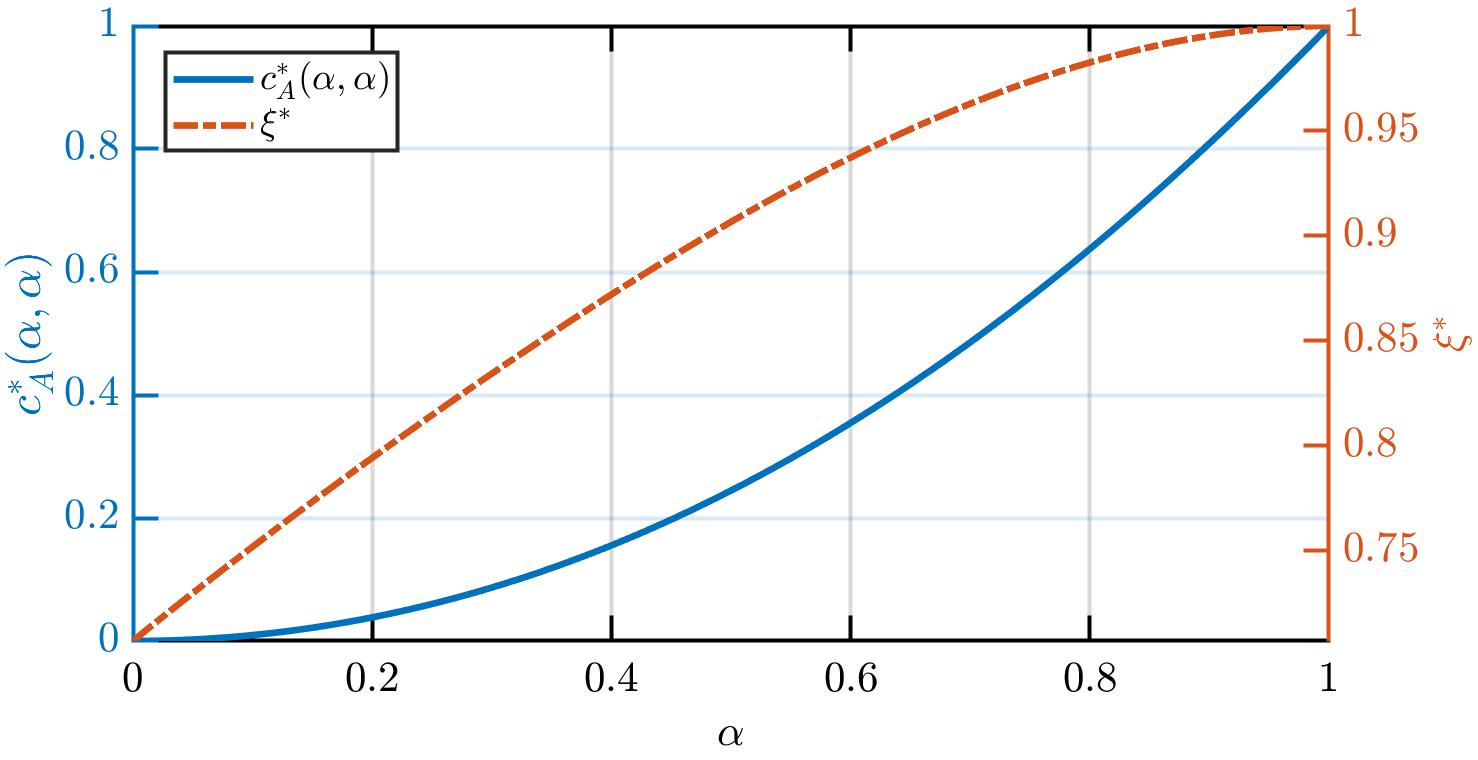}
    \caption{\emph{Self-testing of non-maximally incompatible measurements and non-maximally entangled state:---} Plots of $(i.)$ (solid blue line) the optimal cosine $c_A=c_B=c^*(\alpha)$ \eqref{qubitParam}, and $(ii.)$ (dashed orange line) the Schmidt coefficient $\xi^*$ of the optimal non-maximally entangled quantum state $\ket{\psi}=\xi^*\ket{00}+\sqrt{1-{\xi^*}^2}\ket{11}$ (represented in the Schmidt basis), against the tilting parameter $\alpha\in[0,1)$, self-tested by the maximal quantum value $C_{\alpha,\alpha}(\bm{p})=c_{\mathcal{Q}}(\alpha,\alpha)$ of the symmetrically ($\alpha=\beta$) tilted CHSH inequality \eqref{general_tilted}. As $\alpha\to 1$ ($\eta \to \frac{2}{3}$), the optimal qubit measurements as well as the optimal two-qubit entangled state become almost compatible and product, respectively.}
    \label{fig:SelfTestingStatesAndMeas}
\end{figure}

\subsection{Robust self-testing}

To demonstrate the robustness of the self-testing statements in Theorem \ref{selfTestSymmetric}, we use the numerical SWAP method introduced in \cite{bancal2015nswap, yang2014nswap}. The numerical SWAP technique utilizes the NPA hierarchy to obtain lower bounds $\FC^*_{L}\leq \FC^*\leq \FC$ on the fidelity $\FC$ between the state $\ket{\psi'}$ in a given quantum strategy and the reference state $\ket{\psi}$ which is self-tested, say by the maximal quantum violation $\beta(\bm{p})=\beta_{\QC}$ of a Bell inequality \eqref{general_bell_inequality}, where $L$ is level of NPA hierarchy. It builds upon the fact that, in case of self-testing, the local isometries of the form \eqref{isometries} mapping an optimal state to the target state $\ket{\psi}$ can be implemented via a partial SWAP unitary $U_{SWAP}$ which depends on the optimal measurements. It follows then that every state in an optimal quantum strategy, and only those states, saturate the inequality $\FC = \bra{\psi}U_{SWAP}\ket{\psi'} \leq 1$. Hence, the problem of certifying that the maximal quantum value of a Bell functional self-tests the target state can be rephrased as that of checking whether the minimal fidelity $\FC^*$, over optimal realizations, is $1$. The advantage in this recasting of the problem is that the finding the minimal fidelity $\FC^*$ can be relaxed via the NPA hierarchy.

Another advantage of this method is that it can be used to numerically study the robustness of the self-testing results. If minimization of $\FC$ is carried over quantum strategies attaining a suboptimal functional value $c_\QC(\alpha, \alpha)-\epsilon$, then the minimum fidelity $\mathcal{F}^*$ quantifies how close the state in any such strategy must be to the target $\ket{\psi}$. In figure FIG.~\ref{fig:numerical_swap} we plot the numerically computed lower bounds on minimum fidelity $\mathcal{F}^*_{L}\leq \FC^*$ from level $L$ of the NPA hierarchy against the value of the symmetrically tilted functional $C_{\alpha,\alpha}(\bm{p})$, for $\alpha\in\{0.2,0.3,0.4,0.6,0.8\}$. Since the optimal observables $\hat{A}_1$ and $\hat{B}_1$ are given by some linear combination of gates $\hat{Z}$ and $\hat{X}$ in the SWAP circuit \eqref{qubitParam}, to find $\FC^*_L$, we introduce extra dichotomic operators $\hat{A}_2$ and $\hat{B}_2$ to account for the $X$ gate on Alice's and Bob's side and impose the relevant extra constraints by means of localising matrices \cite{bancal2015nswap}.
\begin{figure}[h]
    \centering
     \includegraphics[width=\linewidth]{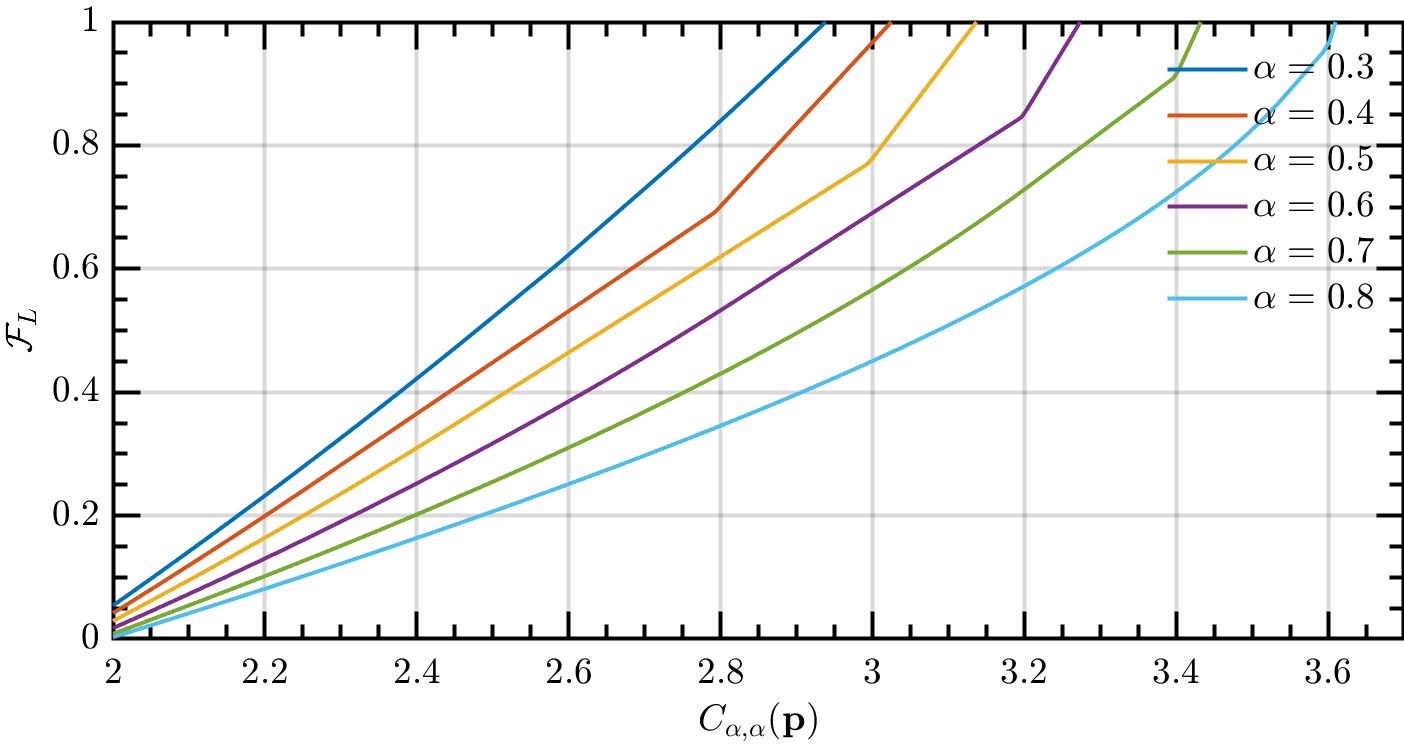}
    \caption{\emph{Robustness of the self-testing statements:---} A plot of lower bounds $\mathcal{F}^*_L$ on the minimum quantum fidelity $\mathcal{F}_L^*\leq \mathcal{F}^*$ from the level $L=3$ of NPA hierarchy between the actual state and the optimal self-testing state against the observed value $C_{\alpha,\alpha}(\bm{p})$ of the symmetrically ($\alpha=\beta$) tilted CHSH functional \eqref{general_tilted} for tilting parameters $\alpha\in \{0.3, 0.4, 0.5,0.6,0.7,0.8\}$.} 
    \label{fig:numerical_swap}
\end{figure}
We find that level $3$ of the NPA hierarchy is enough for producing the fidelity curves for $\alpha < 0.9$, but higher values of $\alpha$ require increasing levels of the hierarchy. This rise in the minimum hierarchy level required for certification is also observed in the calculation of NPA upper bounds for the maximal quantum value $ c_{\QC}(\alpha,\alpha)\leq c_{\QC_L}(\alpha,\alpha)$, which we address in the following subsection. As can be seen in the figure, the minimal fidelity becomes $\mathcal{F}^*_L = \mathcal{F}^* = 1$ when the functional value is maximal, as expected from the discussion above. In particular, for $\alpha>0.3$ a clear change in behavior in the fidelity curve is observed when the functional value reaches the local bound $C_{\alpha,\alpha}(\bm{p}) = c_{\LC}(\alpha,\alpha)=2(1+\alpha)$, after which the dependence is almost linear.  

\subsection{Analytical intractability of SOS decompositions}
\label{SOSintract}
As already mentioned above, the SOS decompositions found in the case of one ideal detector do not easily generalize to the general case of efficiencies $\eta_A, \eta_B < 1$. On one hand, failed numerical attempts to find tight SOS decompositions with degree $2$ polynomials suggested they were not available for some values of the tilting parameters $\alpha, \beta$. Since a polynomial-degree SDP relaxation of the SOS problem is dual to the moment SDP relaxation introduced by NPA, we started by charting, in the tilting parameter space $\alpha,\beta\in[0,2]$, the minimum level $L$ required for the NPA upper bound $c_{\QC_L}(\alpha,\beta)$ to be tight such that $c_{\QC_L}(\alpha,\beta)=c_{\QC}(\alpha,\beta)$, where the maximal quantum values $c_{\QC}(\alpha,\beta)$ of the doubly-tilted CHSH inequalities are obtained from Theorem \ref{selfTestSymmetric} and 2 up to any desired precision. We show the results of this numerical exploration in In FIG. \ref{NPAlevels}. 
\begin{figure}[H] 
    \centering
     \includegraphics[width=\linewidth]{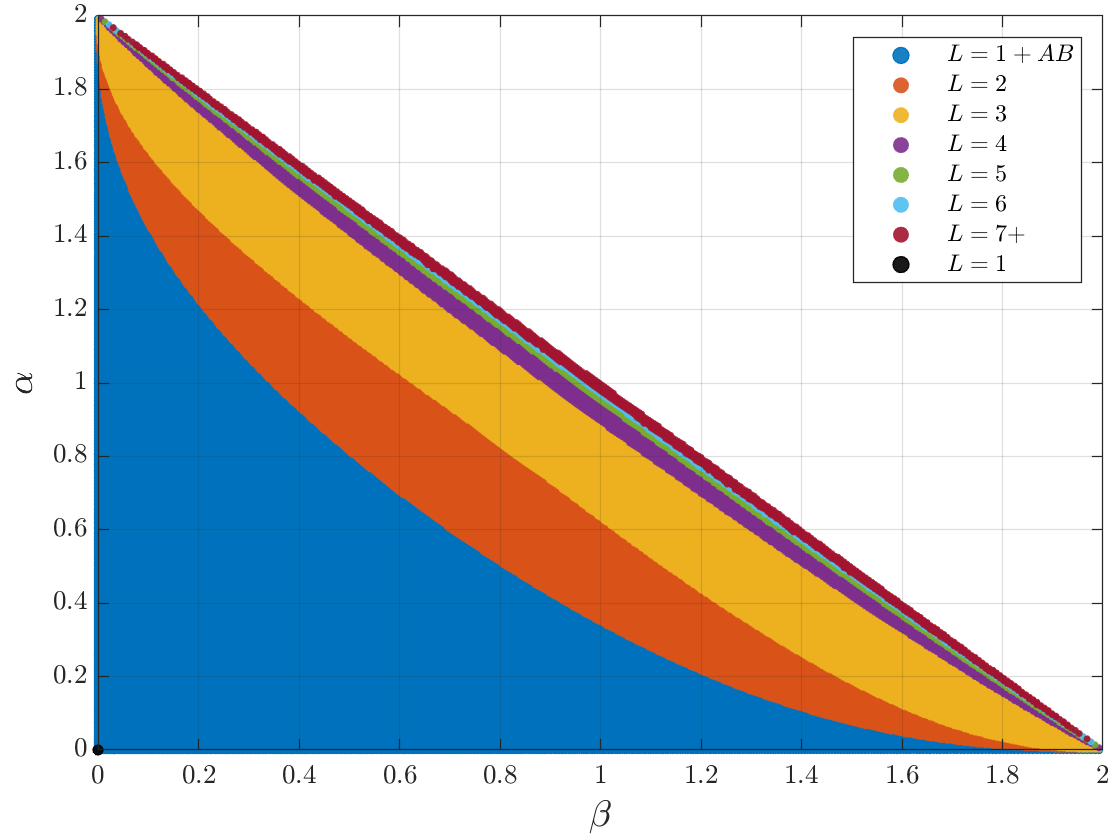}
    \caption{\label{NPAlevels} \emph{ Increasing levels for tight NPA upper bounds in the CHSH scenario:---}  
    A plot of the minimum level $L\in\{1,1+AB,2,3,4,5,6,7+\}$ of the NPA hierarchy required to saturate the maximum quantum value $c_{\QC}(\alpha,\beta)$ of the doubly-tilted CHSH inequality \eqref{general_tilted} such that $c_{\QC_L}(\alpha,\beta)=c_{\QC}(\alpha,\beta)$ for $\alpha,\beta\in[0,2]$. Notably, while level $1+AB$ suffices when either $\alpha=0$ (x axes) or $\beta=0$ (y axes), the required minimum level of the NPA hierarchy rapidly increases as the tilting parameters tend toward the critical boundary $\alpha + \beta = 2$.}
\end{figure}
As the figure shows, there is a rapid increase of $L$ as the tilting parameters approach the critical boundary $\alpha+\beta=2$. In particular, for $\alpha=\beta=0.999$ we find that not even $L=10$ is enough for the NPA upper bound to match the maximum quantum value, as illustrated by the upper bound sequence in FIG.~\ref{NPAlevels1}. In particular, to a $12$-digits approximation the analytical answer is $c_{\QC}(\alpha=0.999,\beta=0.999)=3.9980\ 0000\ 1333$, whereas the NPA hierarchy gives the upper bound of $c_{\QC_{L=10}}(\alpha=0.999,\beta=0.999)=3.9980\ 0000\ 2190$. These high precision calculations were performed using the toolkit for non-commutative polynomial optimization Moment \cite{garner24}, the modeller YALMIP \cite{yalmip}, and the arbitrary-precision solver SDPA-GMP \cite{Nakata2010}. 

These observations raise the question of whether there is a finite level at which the NPA upper bound coincides with the maximum quantum value for all tilting parameters $\alpha, \beta$. Because of the SDP duality mentioned earlier, it is easy to see that NPA level $L$ corresponds to polynomial degree on the SDP relaxation of the SOS decompositions problem. Thus the problem of finding tight SOS decompositions (of the form \eqref{SOS}) for the doubly-tilted CHSH inequalities \eqref{general_tilted} is intractable via analytical methods described in \cite{tiltedasym} for the tilted CHSH inequalities \eqref{tiltedasym}.

\begin{figure}
    \centering
    
     \includegraphics[width=1\linewidth]{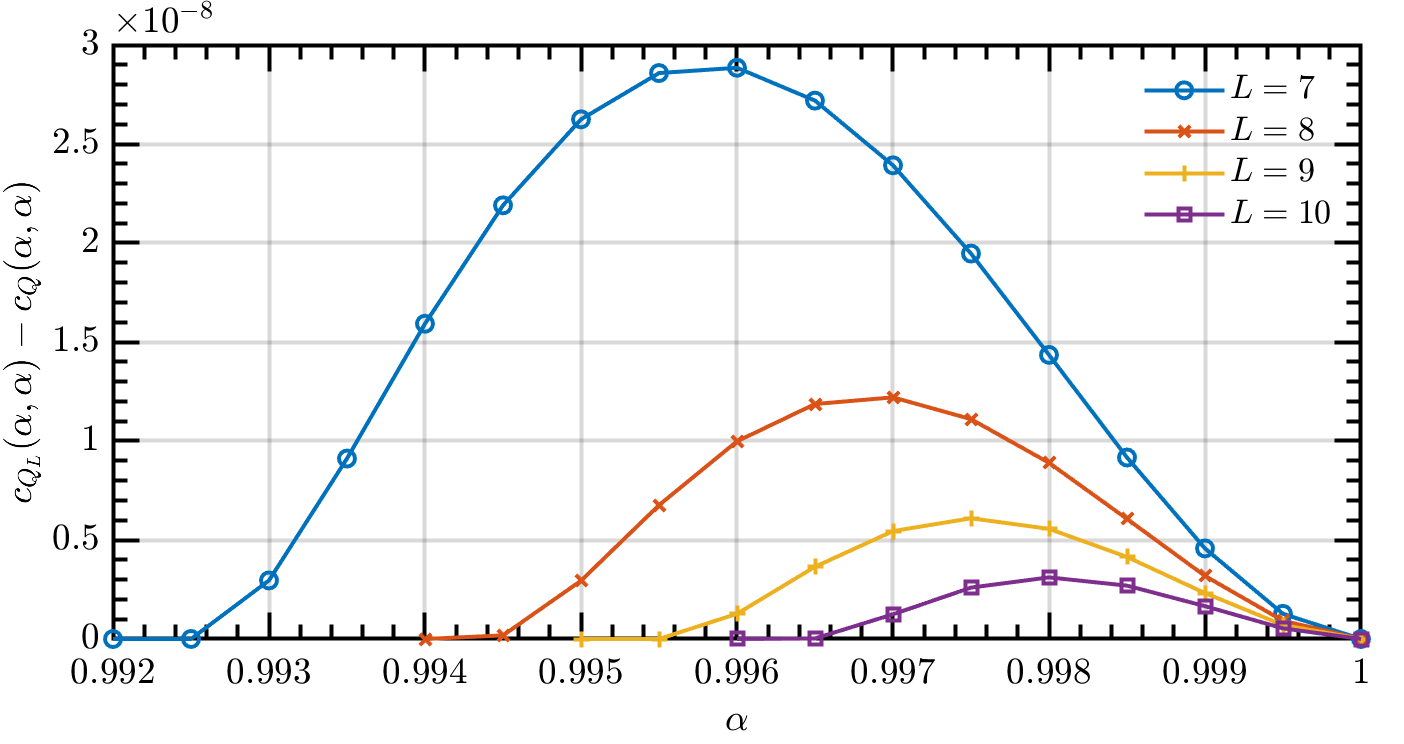}  \caption{\label{NPAlevels1} \emph{ Error in estimation of maximum quantum violation with high levels of NPA hierarchy:---} The curves in plot (a) correspond to the difference  $c_{\QC_L}(\alpha,\alpha)-c_{\QC}(\alpha,\alpha)$ ($\times10^{-8}$) between the upper-bounds from the levels $L\in\{7,8,9,10\}$ of the NPA hierarchy and the maximal quantum violation $c_{\QC}(\alpha,\alpha)$ of symmetrically ($\alpha=\beta$) tilted Bell inequalities \eqref{general_tilted}, against $\alpha\in[0.99,1]$, obtained with an arbitrary precision solver SDPA-GMP \cite{Nakata2010} and our analytical solution (Theorem \ref{selfTestSymmetric}), respectively. }
\end{figure}
\section*{Discussions}

In this work, we addressed the problem of finding the quantum strategies that maximize the loophole-free violation of a given Bell inequality in the presence of inefficient detectors. In Lemma \ref{tiltedTheorem}, we demonstrate that for \emph{any} Bell inequality and \emph{any} specification of detection efficiencies, the quantum strategies that yield the maximal loophole-free violation are the ones that maximally violate a tilted version of the Bell inequality. We then consider the CHSH inequality \eqref{CHSH_ineq} to retrieve a family of doubly-tilted CHSH inequalities \eqref{tilted_CHSH} from Lemma \ref{tiltedTheorem}. As our main result, we analytically derive self-testing statements (Theorem \ref{selfTestSymmetric} and Theorem 2) for the doubly-tilted CHSH inequalities, entailing the maximal quantum violation and the ensuing optimal quantum strategy. We note that the maximal quantum violation of the symmetrically ($\alpha,\beta$) tilted CHSH inequality \eqref{general_tilted} from  Theorem \ref{selfTestSymmetric} differs from that reported in \cite{miklin2022exponentially}. Additionally, it is worth noting that the nonlocal correlations maximally violating the symmetrically $(\alpha=\beta)$ tilted inequalities in \eqref{general_tilted} also maximize the sum $\expval{A_0} + \expval{B_0}$ for a given value $C(\bm{p})$ of the CHSH functional. Consequently, the quantum strategies obtained via Theorem \ref{selfTestSymmetric} allow us to recover the boundary of the set of quantum correlations on the slice $C(\bm{p})$ vs $\expval{A_0}+\expval{B_0}$ of the no-signaling polytope, plotted in FIG. \ref{fig:anufig}. Furthermore, these quantum strategies have found application in the recently introduced routed Bell experiments \cite{miklin2022exponentially,chaturvedi2024extending,Lobo2024certifyinglongrange}.

Besides providing a convenient way for finding quantum strategies that generate the maximum loophole-free nonlocality, Lemma \ref{tiltedTheorem}, and in particular the expression of the tilted Bell inequalities \eqref{tilted_B}, offers crucial insights into how the optimal quantum behaviors move with the efficiencies of the detectors, in the no-signaling polytope. 
Essentially, with decreasing efficiencies, the hyperplane corresponding to the Bell inequality in \eqref{tilted_B} tilts about a local deterministic point specified by the assignment strategy, in turn moving the optimal strategy along towards the local polytope $\LC$, and specifically towards the local deterministic point. We exemplify this observation in FIG. \ref{fig:anufig}.

The family of self-testing statements in Theorem \ref{selfTestSymmetric} and 2 provide crucial insights into how decreasing detector efficiency affects optimal quantum strategies. For inefficient detectors $\eta_A,\eta_B<1$, the optimal strategy requires partially entangled states for maximal loophole-free nonlocality and the degree of entanglement decreases as the efficiencies approach the critical values \eqref{critical_efficiency}, as the state becomes almost product (see FIG. \ref{fig:SelfTestingStatesAndMeas} for the symmetric ($\alpha=\beta$) case). While this finding is in line with observations from the known asymmetric case $\eta_A=1$, the optimal measurements for the general case $\eta_A\neq \eta_B$ present an intriguing deviation. Specifically, in the asymmetric case $\eta_A=1$ ($\beta=0$), Alice's optimal measurements \eqref{measurements_asymmetric} remain maximally incompatible irrespective of Bob's detection efficiency $\eta_B \in (1/2,1]$ ($\alpha\in[0,2)$), while Bob requires partially incompatible measurements whose incompatibility decreases with his decreasing efficiency $\eta_B$ ($\alpha$), approaching almost compatible measurements as $\eta_B\to1/2$ ($\alpha\to 1$). However, the situation changes significantly when both detectors are inefficient. As we illustrate in FIG. \ref{fig:SelfTestTheM}, in contrast to the asymmetric case, Alice's measurements depend non-trivially on Bob's detection efficiency $\eta_B$. Interestingly, whenever $\eta_A<1$ ($\beta>0$), Alice's optimal measurements sharply tend towards  compatible as $\eta_B$ ($\alpha$) approaches the critical boundary ($\alpha\to 2-\beta$) \eqref{critical_efficiency}. This observation highlights the natural importance of partially incompatible measurements in device independent cryptography with imperfect detectors \cite{masini2024jointmeasurability}. 

\begin{figure} 
    \centering
   \includegraphics[width=\linewidth]{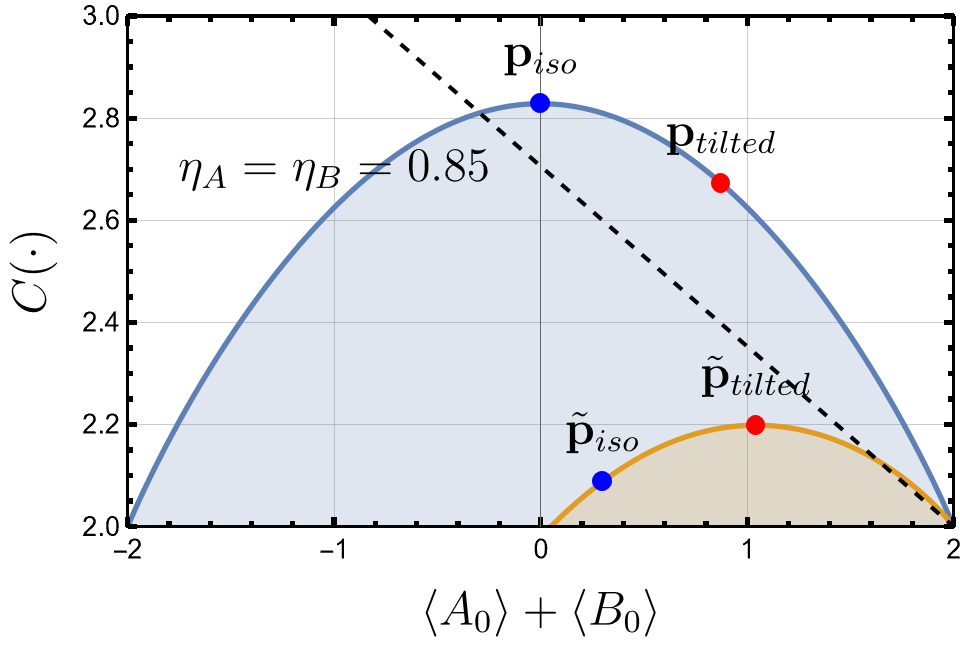}
    \caption{\emph{Effect of inefficient detectors on nonlocal correlations:---} This graphic illustrates the impact of detector inefficiencies on nonlocal quantum correlations within the simplest Bell scenario. 
    The blue region represents the set of quantum correlations $\bm{p}\in\QC$. The solid blue curve represents the boundary of $\QC$ on the slice $C(\bm{p})$ vs $\mbraket{A_0}+\mbraket{B_0}$ of the no-signaling polytope.
    With the detection efficiencies $\eta_A=\eta_B=0.85$, and the local assignement strategy $q(ab|xy)=\delta_{a,+1}\delta_{b,+1}$, the effective quantum correlations $\tilde{\bm{p}}$ \eqref{effectiveP} are constrained to the smaller orange subset. The blue dot on the solid blue curve corresponds to the isotropic behavior $\bm{p}_{iso}$ that maximally violates the CHSH inequality \eqref{CHSH_ineq}, $C(\bm{p}_{iso})=2\sqrt{2}$, in ideal conditions, while the corresponding effective behavior (blue dot on the solid orange curve) $\tilde{\bm{p}}_{iso}$ no longer attains the maximum loophole-free violation of the CHSH inequality,  $C(\tilde{\bm{p}}_{iso})\approx 2.08854$. The red dot on the solid blue curve corresponds to the quantum behavior $\bm{p}_{tilted}$ which maximally violates the doubly-tilted CHSH inequality \eqref{tilted_CHSH} (dashed black line), $C_{\eta_A,\eta_B}(\bm{p}_{tilted})=2.98098$. The corresponding effective behavior (red dot on the solid orange curve) $\bm{p}_{tilted}$
    attains the maximum loophole-free violation $C(\tilde{\bm{p}}_{tilted})\approx2.19876 $ of the CHSH inequality, thereby, exemplifying Lemma \ref{tiltedTheorem}.
    }
    \label{fig:anufig}
\end{figure}

Besides the inferences concerning the maximal effective nonlocality in the presence of inefficient detectors, Theorem \ref{selfTestSymmetric} and \ref{doublyTiltedTheorem} allow us to reveal a fascinating intricacy. While the NPA hierarchy is claimed to not converge to the set of quantum correlations in general \cite{ji2000}, lower levels (e.g. $L=1+AB,2,3$) were widely believed to be sufficient to characterize it in the CHSH scenario.
However, in striking contrast to the widely held belief, we demonstrate in FIG. \ref{NPAlevels} and \ref{NPAlevels1}, as the tilting parameters $\alpha,\beta>0$ approach critical limit $\alpha+\beta\to 2$, the level $L$ of NPA hierarchy required for a tight upper bound on the maximal quantum value of doubly-tilted CHSH functional \eqref{general_tilted} increases drastically (FIG. \ref{NPAlevels}). This effect is the most pronounced for the symmetric case $\alpha=\beta$ where we find that even level $10$ of the NPA hierarchy does not yield tight upper bounds on the maximal violation of the symmetrically tilted CHSH inequalities as $\alpha\to 1$ (FIG. \ref{NPAlevels1}). Crucially, it remains unclear if \emph{any} finite level of NPA will be enough to characterize all extremal quantum correlations even in the CHSH scenario. This effect also renders the traditional SOS decomposition method impractical for deriving analytical self-testing statements intractable. 

Even more strikingly, this unexpected feature is observed for quantum behaviors maximally violating the doubly-tilted CHSH inequalities \eqref{general_tilted} near the critical boundary $\alpha+\beta=2$, as depicted in FIG.~\ref{NPAlevels}, which are require with almost compatible measurements (FIG. \ref{fig:SelfTestTheM}) and almost product states. 
Since the level $1+AB$ is enough for the asymmetric case $\beta=0$ wherein Alice's optimal measurement remains maximally incompatible, irrespective of the value of $\alpha$, the effect of increasing NPA levels seems to be linked to the aforementioned intricate dependency of the optimal partially incompatible measurements on the tilting parameters $0<\alpha,\beta<2$, and which becomes sharper as the optimal measurements for both parties become almost compatible as $\alpha+\beta\to 2$. These results then raise the question of whether the complexity of the characterization of extremal nonlocal quantum correlation, as measured by the minimum level of the NPA hierarchy required to saturate the maximal quantum violation of the associated Bell inequality, is related to the partial incompatibility of the quantum measurements realizing them.

\begin{figure} 
    \centering
     \includegraphics[width=\linewidth]{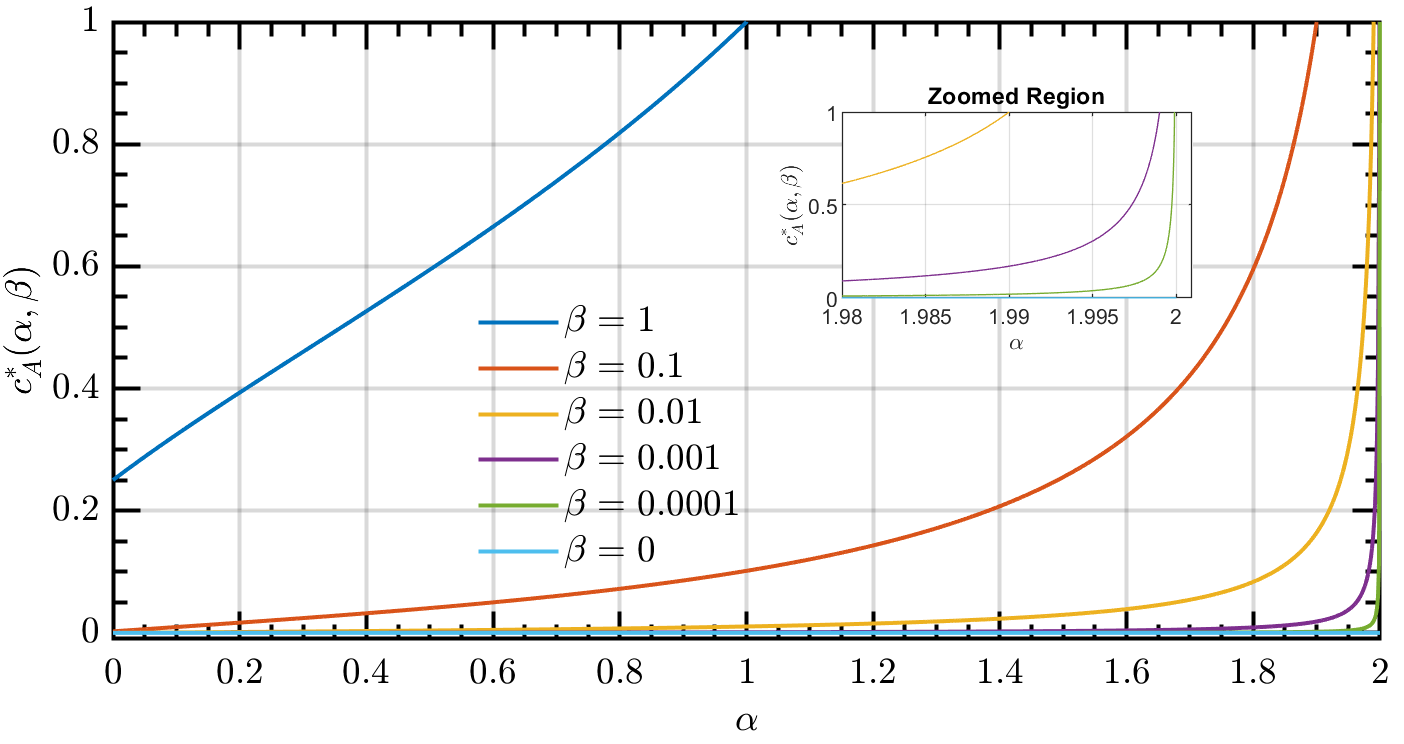}
    \caption{\emph{Self-testing of partially incompatible observables:---} A plot of the optimal cosines of Alice $c^*_A(\alpha,\beta)$ \eqref{qubitParam} with $\beta\in\{\alpha,0.1,0.0
    1,0.001,0\}$ self-tested by the maximum quantum value $C_{\alpha,\beta}(\bm{p})=c_{\QC}(\alpha,\beta)$ of the doubly-tilted CHSH inequalities \eqref{general_tilted} against the tilting parameter $\alpha\in[0,1]$. Here, $c^*_A(\alpha,\beta)=0$ implies maximally incompatible whereas $c^*_A(\alpha,\beta)=1$ reflects compatible observables. Notice, that in contrast to the asymmetrically tilted case $\beta=0$ wherein Alice's optimal cosine $c^*_A(\alpha,\beta=0)$ stays constant with respect to $\alpha$ \cite{tiltedasym}, for the general case, whenever $\beta>0$, Alice's optimal measurements change with $\alpha=\frac{2}{\eta_B}(1-\eta_B)$ and in-turn depend on Bob's detection efficiency $\eta_B$, and tend towards compatible measurements as $\alpha\to 2-\beta$.}
    \label{fig:SelfTestTheM}
\end{figure}

 \section*{Data Availability}
Data Availability does not apply to this manuscript.
 \section*{Code Availability}
The Mathematica notebooks illustrating the proofs of Theorem \ref{selfTestSymmetric}, Theorem \ref{doublyTiltedTheorem} and Lemma \ref{differentlyTiltedTheorem} are available online \cite{scalaSelfTesting}.
\section*{Acknowledgements} 
We would like to thank Tamás Vértesi, Marcin Pawłowski, Jedrzej Kaniewski,  Cosmo Lupo, Marcin Wieśniak and Marek Żukowski for insightful discussions. N.G. acknowledges support from CONICET of Argentina, CONICET PIP Grant No. 11220200101877CO and National Science Centre, Poland under the SONATA project ``Fundamental aspects of the quantum set of correlations'' (grant no.~2019/35/D/ST2/02014). G.S. received funding from the European Union’s Horizon Europe research and innovation program under the project "Quantum Secure Networks Partnership" (QSNP, grant agreement No 101114043), QuantERA/2/2020, an ERA-Net co-fund in Quantum Technologies, under the eDICT project, and INFN through the project "QUANTUM". The research of M.A. was supported by the European Union--Next Generation UE/MICIU/Plan de Recuperación, Transformación y Resiliencia/Junta de Castilla y León.  E.P. acknowledges support by NCN SONATA-BIS grant No. 2017/26/E/ST2/01008.
A.C. acknowledges financial support by NCN grant SONATINA 6 (contract No. UMO-2022/44/C/ST2/00081).

\section*{Competing Interests}
The authors declare that there are no competing interests.
\section*{Author Contributions}
All authors contributed equally to this manuscript.

\bibliography{cite,bib_loopholes}
\appendix
\clearpage
%\onecolumngrid

\section{Proof of Lemma 1} \label{proofOfLemma1}
In this section we present the detailed proof for Lemma \ref{tiltedTheorem}. In a $(m_A,m_B,d_A,d_B)$-Bell scenario, consider an ideal behavior $\bm{p} \in \mathbb{R}_+^{m_A d_A m_B d_B}$ which violates a given Bell inequality $\beta(\bm{p})\leq \beta_\LC$ \eqref{general_bell_inequality}. Let us now consider imperfect detectors. In particular, the detectors sometimes fail to click resulting in a ``no-click" event. Without loss of generality, we may treat the ``no-click" as an additional local outcome, denoted by $\varnothing$. To model the effect of inefficient detectors on the observed behavior, we assume that Alice's and Bob's detectors click independently of each other and independently of the ideal behavior such that $p_A(\varnothing|x)=\eta_A,p_B(\varnothing|y)=\eta_B$ for all $x,y$. We note that this assumption is purely operational and concerns the observed behavior and does not restrict in any way the internal workings of the devices, which may be in control of a eavesdropper. Then effect of imperfect detectors is described by an affine map $\Omega^{(\varnothing)}_{\eta_A,\eta_B}:\mathbb{R}^{m_A d_A m_B d_B}_+\to\mathbb{R}^{m_A (d_A+1) m_B (d_B+1)}_+$. For any $\eta_A,\eta_B$: $\Omega^{(\varnothing)}_{\eta_A,\eta_B}$ transforms an ideal behavior $\bm{p}$ from a $(m_A,m_B,d_A,d_B)$-Bell scenario into an effective behavior $\bar{\bm{p}}=\Omega^{(\varnothing)}_{\eta_A,\eta_B}(\bm{p})$ in a $(m_A,m_B,d_A+1,d_B+1)$-Bell scenario such that the entries of $\bar{\bm p}$ are given by,
\begin{align}\label{detE}
    \bar{p}(ab|xy)  =  \begin{cases} \eta_A\eta_Bp(ab|xy), \ \ \text{if} \ a\neq \varnothing \ \& \ b\neq \varnothing, \\
    (1-\eta_A)\eta_B p_B(b|y), \ \ \text{if} \ a= \varnothing \ \& \ b\neq \varnothing, \\
    \eta_A(1-\eta_B) p_A(a|x), \ \ \text{if} \ a\neq \varnothing \ \& \ b= \varnothing, \\
    (1-\eta_A)(1-\eta_B), \ \ \text{if} \ a= \varnothing \ \& \ b= \varnothing.
    \end{cases}
\end{align}
An effective behavior $\bar{\bm{p}}$ is nonlocal in a loophole-free way if and only if it lies outside the local polytope $\LC$ of an enlarged $(m_A,m_B,d_A+1,d_B+1)$-Bell scenario. 

Keeping $\varnothing$ as an additional outcome has two drawbacks. Besides increasing the complexity of the characterization of the correlations sets $\LC,\QC$, this strategy is experimentally cumbersome as it requires $d_A+d_B$ detectors. A more connivent approach which preserves the Bell scenario and requires $d_A+d_B-2$ detectors comprises of locally assigning a pre-existing outcome to the ``no-click" event. We now describe assignment strategies and their effect. 

Let the parties employ a local assignment strategy to mitigate the ``no-click" events. In case of a ``no-click" event Alice and Bob query an assignment strategy specified a vector $\bm{q}\in \mathbb{R}^{m_A d_A m_B d_B}_+$ whose entries $q(ab|xy)$ specify the probabilities of producing the outcomes $a,b$ conditioned on the inputs $x,y$, respectively. 
%A generic local assignment strategy is specified by a local behavior $\bm{q}\in LC \in \mathbb{R}^{m_A d_A m_B d_B}_+$ such that, 
%$$\bm{q} = \sum_{\lambda\in\Lambda} p(\lambda)  \bm{q}^{(\lambda)}=\sum_{\lambda \in \Lambda} p(\lambda) \bm{q}^{(\lambda)}_A \otimes \bm{q}^{(\lambda)}_B$$, 
%where $\lambda$ is arbitrary pre-shared randomness, based on which Alice and Bob produce their outputs with probabilities $q_{\lambda}(ab|xy)=q_\lambda(a|x)q_\lambda(b|y)$. 
The effect of such an assignment strategy then can be summarized as an affine map $\zeta^{\varnothing\mapsto \bm{q}}:\mathbb{R}^{m_A (d_A+1) m_B (d_B+1)}_+\mapsto \mathbb{R}^{m_A d_A m_B d_B}_+$. $\zeta^{\varnothing\mapsto \bm{q}}$ transforms a behavior $\bar{\bm{p}}$ from the larger $(m_A,m_B,d_A+1,d_B+1)$-Bell scenario to a behavior $\tilde{\bm{p}}=\zeta^{\varnothing\mapsto \bm{q}}(\bar{\bm{p}})$ in the original $(m_A,m_B,d_A,d_B)$-Bell scenario, such that the entries of $\bm{\tilde{p}}$ are given by,
\begin{align} \label{assN}\nonumber
    \tilde{p}(ab|xy)&=\bar{p}(ab|xy) + q_A(a|x)\bar{p}(a=\varnothing b|xy)\\  \nonumber
    & \ \ \ \ + q_B(b|y)\bar{p}(ab=\varnothing|xy) \\ %\nonumber
     & \ \ \ \ +q(ab|xy)p(a=\varnothing b=\varnothing|xy), %\\ \nonumber 
    %& = \eta_A\eta_Bp(ab|xy)+(1-\eta_A)\eta_B q_A(a|x)p_B(b|y)\\ \nonumber
    %& \ \ \ \ + (1-\eta_B)\eta_A p_A(a|x)q_B(b|y) \\
    %& \ \ \ \ +(1-\eta_A)(1-\eta_B)q(ab|xy) 
\end{align}
where we have implicitly assumed that $\bm{q}$ is no-signaling such that the marginal distributions $q_A(a|x)=\sum_a q(ab|xy),q_B(b|y)=\sum_b q(ab|xy)$, are well defined. %Notice, that we have not yet invoked the fact the assignment strategy are local. Specifically, a local assignment strategy is an assignment strategy with $\bm{q}\in\LC$. 
 The combined effect of $\Omega^{(\varnothing)}_{\eta_A,\eta_B}$ followed by $\zeta^{\varnothing\mapsto \bm{q}}$, on the ideal behavior $\bm{p}$ can be summarized as an affine map 
$\Omega^{(\varnothing \mapsto \bm{q})}_{\eta_A,\eta_B}:\mathbb{R}^{m_A d_A m_B d_B}_+\mapsto \mathbb{R}^{m_A d_A m_B d_B}_+$. For any $\eta_A,\eta_B$: $\Omega^{(\varnothing)\mapsto \bm{q}}_{\eta_A,\eta_B}$ transforms an ideal behavior $\bm{p}$ in a $(m_A,m_B,d_A,d_B)$-Bell scenario into an effective behavior $\tilde{p}=\Omega^{(\varnothing)\mapsto \bm{q}}_{\eta_A,\eta_B}(\bm{p})=\zeta^{\varnothing\mapsto \bm{q}}(\Omega^{(\varnothing)}_{\eta_A,\eta_B}(\bm{p}))$ in the same $(m_A,m_B,d_A,d_B)$-Bell scenario. Plugging in \eqref{detE} into \eqref{assN} we retrieve the entries of effective behavior $\tilde{\bm{p}}$,
\begin{align} \label{assN2}\nonumber
    \tilde{p}(ab|xy)&= \eta_A\eta_Bp(ab|xy)+(1-\eta_A)\eta_B q_A(a|x)p_B(b|y)\\ \nonumber
    & \ \ \ \ + (1-\eta_B)\eta_A p_A(a|x)q_B(b|y) \\
    & \ \ \ \ +(1-\eta_A)(1-\eta_B)q(ab|xy). 
\end{align}
Since \eqref{assN2} holds for all $x,y,a,b$, it can be succinctly summarized in terms of ideal behavior vectors, $\bm{p}\in\mathbb{R}^{m_A d_A m_B d_B}_+ ,\bm{p}_A\in \mathbb{R}^{m_A d_A}_+,\bm{p}_B\in\mathbb{R}^{m_B d_B}_+$, and the behavior vectors associated with the assignment strategy, $\bm{q}\in\mathbb{R}^{m_A d_A m_B d_B}_+,\bm{q}_A\in \mathbb{R}^{m_A d_A}_+,\bm{q}_B\in\mathbb{R}^{m_B d_B}_+$
\eqref{effectiveP},
\begin{align} \label{effectivePOmega} \nonumber
    \tilde{\bm p} & = \eta_A\eta_B{\bm p} + \eta_A(1-\eta_B){\bm p}^A\otimes {\bm q}^A \\  
    & \ \ \ \ + (1-\eta_A)\eta_B {\bm q}^A \otimes {\bm p}^B + (1-\eta_A)(1-\eta_B)\bm q.
\end{align} %specified by a vector of real coefficients $\bm{\beta}\in \mathbb{R}^{m_A d_A m_B d_B}$
We are interested in the maximal loophole-free violation of the Bell inequality \eqref{lfviolation}, $\beta(\tilde{\bm{p}})-\beta_{\LC}$, where $\beta(\tilde{\bm{p}})$ has the following form,
\begin{align} \label{impEq} \nonumber
    \beta(\tilde{\bm{p}})=&\eta_A\eta_B \beta(\bm{p}) + (1-\eta_A)\eta_B\beta(\bm{q}_A\otimes\bm{p}_B)   \\ \nonumber
    &+ \eta_A(1-\eta_B)\beta(\bm{p}_A\otimes\bm{q}_B) \\
    &+(1-\eta_A)(1-\eta_B)\beta(\bm{q}),
    \end{align}
where we have used the convex-linearity of the Bell functional \eqref{general_bell_inequality}. 

Notice that for all $\eta_A,\eta_B\in[0,1]$, $\beta(\tilde{p})$ \eqref{impEq} is a convex combination four terms $\beta(\bm{p}),\beta(\bm{q}_A\otimes\bm{p}_B),\beta(\bm{p}_A\otimes\bm{q}_B),\beta(\bm{q})$. Since $\beta(\bm{p})\leq \beta_\LC$ for all $\bm{p}\in \LC$, and the assignment strategy is assumed to be local such that $\bm{q}\in \LC$, these terms are individually upper bounded by $\beta_\LC$. Hence, for any $\eta_A,\eta_B\in[0,1]$, and any local assignment strategy $\bm{q}\in \LC$, a violation of the given Bell inequality $\beta(\tilde{p})\leq \beta_\LC$ by the effective behavior $\tilde{\bm{p}}$ remains a sufficient condition for loophole-free nonlocality.

Since $\tilde{\bm{p}}$ \eqref{impEq} is a function of both the local the assignment strategy $\bm{q}\in\LC$ and the ideal quantum behavior $\bm{p}\in\QC$, finding its maximum value is a two-fold optimization problem. We first consider the optimization over local assignments $\bm{q}\in\LC$ strategies. Observe that for any fixed ideal behavior $\bm{p}\in\QC$ and any given $\eta_A,\eta_B$, $\tilde{\bm{p}}$ \eqref{impEq} forms a linear functional of the local assignment strategy $\bm{q}\in \LC$. Hence, to find its maximum value we need only consider the $(d_A)^{m_A}(d_B)^{m_B}$ deterministic vertices of the local polytope $\LC$ which correspond to deterministic assignment strategies. 

We denote a deterministic assignment by $\bar{\bm{q}}\in\mathbb{R}^{m_A d_A m_B d_B}_+$ with entries $\bar{q}(ab|xy)=\delta_{a,a_x}\delta_{b,b_y}$, where $a_x,b_y$ denote the local deterministic assignments. For each such deterministic assignment strategy, $\beta(\tilde{\bm{p}})$ \eqref{impEq} is a linear functional to be maximized over ideal quantum behaviors $\bm{p}\in\QC$. Then for any $\eta_A,\eta_B\in[0,1]$, the maximal loophole-free violation of Bell inequality is the maximum over the resultant set of $(d_A)^{m_A}(d_B)^{m_B}$ maximum quantum values $\beta(\tilde{\bm{p}})$ each associated with a deterministic assignment strategy. 

Maximizing the value of $\beta(\tilde{\bm{p}})$ \eqref{impEq} for any given deterministic strategy $\bar{\bm{q}}$ and $\eta_A,\eta_B$ amounts to finding the maximum quantum violation of a tilted version of a Bell inequality,
\begin{equation}
    \beta_{\eta_A,\eta_B}(\bm{p})  \leq  \beta_\LC(\eta_A,\eta_B) = \frac{\beta_\LC}{\eta_A\eta_B}-\frac{(1-\eta_A)(1-\eta_B)}{\eta_A\eta_B}\beta(\bar{\bm{q}}), 
\end{equation}
where the tilted Bell functional $\beta_{\eta_A,\eta_B}(\bm{p})$ is obtained by dividing \eqref{impEq} by $\eta_A\eta_B$ and rearranging,
\begin{align} \label{impEqF} \nonumber
      \beta_{\eta_A,\eta_B}(\bm{p})= & \beta(\bm{p}) + \frac{1-\eta_A}{\eta_A}\beta(\bar{\bm{q}}_A\otimes\bm{p}_B) \\
      &+ \frac{1-\eta_B}{\eta_B}\beta(\bm{p}_A\otimes\bar{\bm{q}}_B) 
\end{align}
such that, given $\bar{\bm{q}},\eta_A,\eta_B$, the maximum loophole-free value of the given Bell functional $\beta(\tilde{\bm{p}})$ can be obtained from $\beta_{\QC}(\eta_A,\eta_B)=\max_{\bm{p}\in\QC}\beta_{\eta_A\eta_B}(\bm{p})$ in the following way,
\begin{equation}
    \beta(\tilde{\bm{p}}) = \eta_A\eta_B\beta_\QC(\eta_A,\eta_B)-{(1-\eta_A)(1-\eta_B)}\beta(\bar{\bm{q}}).
\end{equation}
Finally, we define single party Bell functionals, $\beta^{\bar{\bm{q}}}_A(\bm{p}_A),\beta^{\bar{\bm{q}}}_B(\bm{p}_B)$ defined as,
\begin{align} \label{last} \nonumber
    \beta^{\bar{\bm{q}}}_B(\bm{p}_B) &= \beta(\bar{\bm{q}}_A\otimes\bm{p}_B)=\sum_{by}\left(\sum_{ax}\delta_{a,a_x}\beta_{abxy}\right) p_{B}(b|y), \\
    \beta^{\bar{\bm{q}}}_A(\bm{p}_A) &= \beta(\bm{p}_A\otimes\bar{\bm{q}}_B)=\sum_{ax}\left(\sum_{by}\delta_{b,b_y}\beta_{abxy}\right) p_{A}(a|x).
\end{align}
Using \eqref{last} and the function $\alpha(\eta)=(1-\eta)/\eta$ we obtain \eqref{tiltedTheorem}.
\qed

\section{Optimal local assignment strategy for maximum loophole-free violation of the CHSH inequality}\label{appendix_1}
In this section, we demonstrate that the deterministic assignment strategy $\bm{q}\in\LC$ with entries $\bar{q}(ab|xy)=\delta_{a,+1}\delta_{b,+1}$ considered in Section \ref{sec_3} is an optimal choice for all $\eta_A,\eta_B$ to achieve maximal loophole-free nonlocality $C(\tilde{\bm{p}})-2$ in the CHSH scenario. In particular, with the assignment strategy $\bm{q}$, Lemma \ref{tiltedTheorem}, yields the doubly-tilted CHSH inequality \eqref{tilted_CHSH}. However, to find the maximal loophole-free violation of the CHSH inequality $C(\tilde{\bm{p}})-2$, we need to consider all deterministic assignment strategies.

In the CHSH scenario, there are 16 such deterministic assignment strategies, which can be labeled by four bits $s_A, s_B, r_A, r_B\in\{0, 1\}$ specifying the possible deterministic values of the marginals $\mbraket{A_x} = (-1)^{r_A x + s_A}$ and $\mbraket{B_y} = (-1)^{r_B y + s_B}$. For any such a deterministic strategy, using Lemma \ref{tiltedTheorem}, we obtain the following family of doubly-tilted CHSH functionals,
\begin{align} \label{tiltedCHSHgen} \nonumber
C^{(s_A, s_B, r_A, r
_B)}_{\eta_A\eta_B}&(\bm{p}) = \sum_{x, y}(-1)^{x\cdot y}\mbraket{A_x B_y} + \\ \nonumber &
+ \frac{2(1-\eta_B)}{\eta_B}  \sum_x (1+(-1)^{r_B+x})(-1)^{s_B}\mbraket{{A}_x}   \\   & + \frac{2(1-\eta_A)}{\eta_A} \sum_y (1+(-1)^{r_A+y})(-1)^{s_A}\mbraket{B_y}  
\end{align}
Also, for any such deterministic strategy the effective value of the CHSH functional is given by,
\begin{align} \label{effectiveCHSH} \nonumber
    C(\tilde{\bm{p}})=&\eta_A\eta_B C^{(s_A, s_B, r_A, r_B)}_{\eta_A\eta_B}(\bm{p}) \\ &+ 2(1-\eta_A)(1-\eta_B) (-1)^{s_A+s_B+r_Ar_B}.
\end{align}
%For any given $\eta_A,\eta_B$, and each deterministic assignment, we need to find the maximum value of the doubly-tilted CHSH functional \eqref{tiltedCHSHgen}, $C^{(s_A, s_B, r_A, r_B)}_{\QC}({\eta_A,\eta_B})=\max_{\bm{p}\QC}\{C^{(s_A, s_B, r_A, r_B)}_{\eta_A\eta_B}(\bm{p})\}$. Plugging in, $C^{(s_A, s_B, r_A, r_B)}_{\eta_A\eta_B}(\bm{p})=C^{(s_A, s_B, r_A, r_B)}_{\QC}({\eta_A,\eta_B})$ in \eqref{effectiveCHSH} we obtain a list of maximal effective violations of the CHSH inequality. 

Consequently, for any specification of the efficiency, $\eta_A,\eta_B$, the optimal deterministic assignment strategy will be the one which yields maximum effective violation \eqref{effectiveCHSH} of the CHSH inequality. 

We can significantly reduce the complexity of finding the optimal deterministic assignment. It is straightforward to verify that those assignments for which $s_A+s_B+r_Ar_B=0$ (mod $2$) yield the same doubly-tilted CHSH inequality \eqref{tilted_CHSH} up to relabeling of inputs and outcomes. The same holds for the other $8$ strategies satisfying $s_A+s_B+r_Ar_B=1$ (mod $2$) which all yield the inequality \eqref{alttilted_CHSH}. Thus, to find the optimal strategy we need only to compare representatives of each class, namely, the cases: $\bm{q}$ such that $q(ab|xy)=\delta_{a,+1}\delta_{b,+1}$ with $r_A=r_B=S_A=S_B=0$ and $\bm{q}'$ such that $q'(ab|xy)=\delta_{a,-1}\delta_{b,+1}$ with $r_A=r_B=S_B=0,s_A=1$. While former strategy $\bm{q}$ yields the doubly-tilted CHSH inequality \eqref{tilted_CHSH} considered in the main text, the later $\bm{q}'$ yields the following distinct doubly-tilted CHSH inequality,
\begin{align}
    C'_{\eta_A\eta_B}(\bm p) &= C(\tilde{\bm p}) + \frac{2}{\eta_B}(1-\eta_B) \mbraket{A_0}-  \frac{2}{\eta_A}(1-\eta_A) \mbraket{B_0}\nonumber\\
    &\leq  2 \left[1- \frac{1}{\eta_A} - \frac{1}{\eta_B} \right]= c'_\LC(\eta_A,\eta_B). \label{alttilted_CHSH}
\end{align}

%We intend to find the maximum value $c'_{\QC}({\eta_A,\eta_B})=\max_{\bm{p}\in \QC}\{C'_{\eta_A\eta_B}(\bm p)\}$. 
We use the Jordan's lemma and Gröbner basis based technique described in Section \ref{deriveAnalytic} and Appendix \ref{alpanotbeta}, to retrieve the maximum quantum value $c'_{\QC}(\alpha,\beta)$ of \eqref{alttilted_CHSH} via the following Lemma, with $\alpha=\frac{2}{\eta_B}(1-\eta_B),\beta=\frac{2}{\eta_A}(1-\eta_A)$. 

\begin{lemma}
\label{differentlyTiltedTheorem} [Maximum quantum value of \eqref{alttilted_CHSH}]
    The maximum quantum violation $c'_{\mathcal{Q}}(\alpha,\beta)$ of the doubly-tilted CHSH inequality \eqref{alttilted_CHSH} is the largest root of the degree 6 polynomial,  
\begin{equation}\label{analySolgen1}
    f(\lambda) =  4 \lambda^6 -4 \alpha \beta \lambda^5  + 
\sum_{k=0}^4 \tau'_k \lambda^k,
\end{equation}
where the coefficients $\{\tau'_k\}^{4}_{k=0}$ are polynomials in $\alpha,\beta$, such that, 
\begin{equation}
    \tau'_k =\begin{cases}
        \tau_k,&  \text{ if } k\in\{0,2,4\} \\
        -\tau_k,&  \text{ if } k\in\{1,3,5\},
    \end{cases}
\end{equation}
where $\tau_k$ are polynomials in $\alpha,\beta$ defined in \eqref{alphaNotEqSol}.
\end{lemma}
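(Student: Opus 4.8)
The plan is to mirror exactly the Gröbner-basis elimination argument already used in the proof of Theorem~\ref{selfTestSymmetric}, but starting from the Bell operator associated with the alternative doubly-tilted functional $C'_{\alpha,\beta}$ in \eqref{alttilted_CHSH}. First I would invoke Jordan's lemma, as in Section~\ref{deriveAnalytic}, to reduce to a two-qubit strategy with the canonical parametrization \eqref{qubitParam}, so that $\hat{A}_0=\hat{B}_0=\sigma_Z$, $\hat{A}_1=c_A\sigma_Z+s_A\sigma_X$, $\hat{B}_1=c_B\sigma_Z+s_B\sigma_X$. The only change relative to \eqref{sym_tilted_functional} is the sign of the $\beta\,\hat{\mathbb{1}}\otimes\hat{B}_0$ term: here the operator is $\hat{C}'_{\alpha,\beta}=\sum_{xy}(-1)^{xy}\hat{A}_x\otimes\hat{B}_y+\alpha\,\hat{A}_0\otimes\hat{\mathbb{1}}-\beta\,\hat{\mathbb{1}}\otimes\hat{B}_0$. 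I would write down its $4\times4$ matrix representation (the analogue of \eqref{qubit_bell_matrix} with $-\beta$ in place of $\beta$ in the diagonal), compute its characteristic polynomial $q'(\lambda,c_A,c_B,\alpha,\beta)$ after substituting $s_A^2=1-c_A^2$, $s_B^2=1-c_B^2$, and observe that, since flipping $\hat{B}_0\to-\hat{B}_0$ is a local unitary on Bob's qubit (conjugation by $\sigma_X$ up to also swapping the roles of the $\hat{B}_y$ correlators), the spectrum of $\hat{C}'_{\alpha,\beta}$ is related to that of the $\alpha\neq\beta$ operator of Theorem~2 by a sign substitution — this is the structural source of the claimed sign pattern $\tau'_k=(-1)^k\tau_k$.

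Second, I would set up the Lagrangian $L=\lambda+s\,q'(\lambda,c_A,c_B,\alpha,\beta)$ and impose the stationarity conditions $\partial_\lambda L=\partial_{c_A}q'=\partial_{c_B}q'=q'=0$, exactly as in \eqref{poly_conditions0}. The optimal data $(c_A^*,c_B^*,\lambda^*=c'_{\QC}(\alpha,\beta))$ is a common root of these four polynomials; computing a Gröbner basis of the ideal they generate and eliminating $s,c_A,c_B$ produces a univariate polynomial in $\lambda$. I would then factor out the spurious roots corresponding to compatible measurements ($c_A=\pm1$ or $c_B=\pm1$) and to eigenvalues that sit below the local bound $c'_\LC(\alpha,\beta)$; what remains is the degree-$6$ polynomial $f(\lambda)=4\lambda^6-4\alpha\beta\lambda^5+\sum_{k=0}^4\tau'_k\lambda^k$. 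The cleanest way to actually \emph{derive} the coefficient identities $\tau'_k=(-1)^k\tau_k$ is not to recompute the Gröbner basis from scratch but to verify that the substitution $\lambda\mapsto-\lambda$, $\beta\mapsto-\beta$ (equivalently tracking the local-unitary equivalence above through the elimination) carries the Theorem~2 polynomial \eqref{analySolgen} into $f(\lambda)$; this reduces the claim to a symmetry statement about the elimination ideal, which can be checked symbolically. The Mathematica notebook \cite{scalaSelfTesting} would carry out both the direct elimination and this cross-check.

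The main obstacle I anticipate is purely computational bookkeeping rather than conceptual: the elimination over $\mathbb{Q}[\alpha,\beta]$ produces large polynomials, and one must be careful in (i) identifying \emph{all} extraneous factors so that the correct degree-$6$ piece is isolated, and (ii) arguing that the \emph{largest real root} of $f(\lambda)$ — rather than some other root — is the one realized by a legitimate quantum strategy, i.e. that the maximizing $c_A^*,c_B^*$ lie in the admissible range $[0,1)$ (or more generally in $[-1,1]$) and that the associated eigenvector is a genuine two-qubit state. This last point requires checking, for $\lambda=\lambda^*(\alpha,\beta)$, that the corresponding cosines solving $\partial_{c_A}q'=\partial_{c_B}q'=0$ are real and in range and that $\lambda^*$ is the top eigenvalue of the resulting numerical matrix; since $c'_\LC(\alpha,\beta)=2(1-1/\eta_A-1/\eta_B)$ is negative in the regime of interest, one also has to confirm that $f$ has a real root exceeding this bound. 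As with Theorem~\ref{selfTestSymmetric}, I do not expect these verifications to present any genuine difficulty — they are routine sign and interval checks — but they are what turns the algebraic output of the Gröbner computation into the precise statement of the Lemma, and I would relegate the detailed computation to the Supplementary Information and the accompanying notebook.
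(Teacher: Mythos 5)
Your proposal follows the paper's route exactly: the paper's own proof of this lemma is a single sentence deferring to the Gröbner-basis/Lagrange-multiplier elimination of Theorem~\ref{doublyTiltedTheorem}, which is precisely what you propose to rerun for the Bell operator with the sign of the $\beta\,\hat{\mathbb{1}}\otimes\hat{B}_0$ term flipped. One small correction to your symmetry cross-check: since $C'_{\alpha,\beta}=C_{\alpha,-\beta}$ and every monomial of $\tau_k$ in \eqref{alphaNotEqSol} has $\beta$-degree of the same parity as $k$, the single substitution $\beta\mapsto-\beta$ already carries \eqref{analySolgen} into \eqref{analySolgen1}; performing $\lambda\mapsto-\lambda$ \emph{in addition}, as you suggest, would cancel the sign flip and return the original polynomial, so the combined substitution does not establish the claimed pattern $\tau'_k=(-1)^k\tau_k$.
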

\begin{proof}
    The proof proceeds exactly in the same way as the proof of Theorem 2.
 \end{proof}
Having access to $c_{\QC}(\eta_A,\eta_B)$ and $c'_{\QC}(\eta_A,\eta_B)$, we plot the difference, 
\begin{align}\label{differenceAssignment}
\Delta_{\eta_A,\eta_B}=&\eta_A\eta_B\left(c_{\QC}(\eta_A,\eta_B)-c'_{\QC}(\eta_A,\eta_B)\right)\\&+4(1-\eta_A)(1-\eta_B),
\end{align} 
between the maximal effective violation of the CHSH inequality \eqref{effectiveCHSH} obtained via the assignment strategies $\bm{q}$ and $\bm{q}'$ against $\eta_A,\eta_B \in(\frac{1}{2},1]$ in FIG. \ref{fig:optAssign}. We find that, for all $\eta_A,\eta_B$ the assignment strategy $\bm{q}$ which yields the doubly-tilted CHSH inequality \eqref{tilted_CHSH} yields higher effective violation of the CHSH inequality, such that $\Delta_{\eta_A,\eta_B}\geq 0$ for all $\eta_A,\eta_B\in(\frac{1}{2},1]$ satisfying \eqref{critical_efficiency}.
\begin{figure} 
    \centering
     \includegraphics[width=\linewidth]{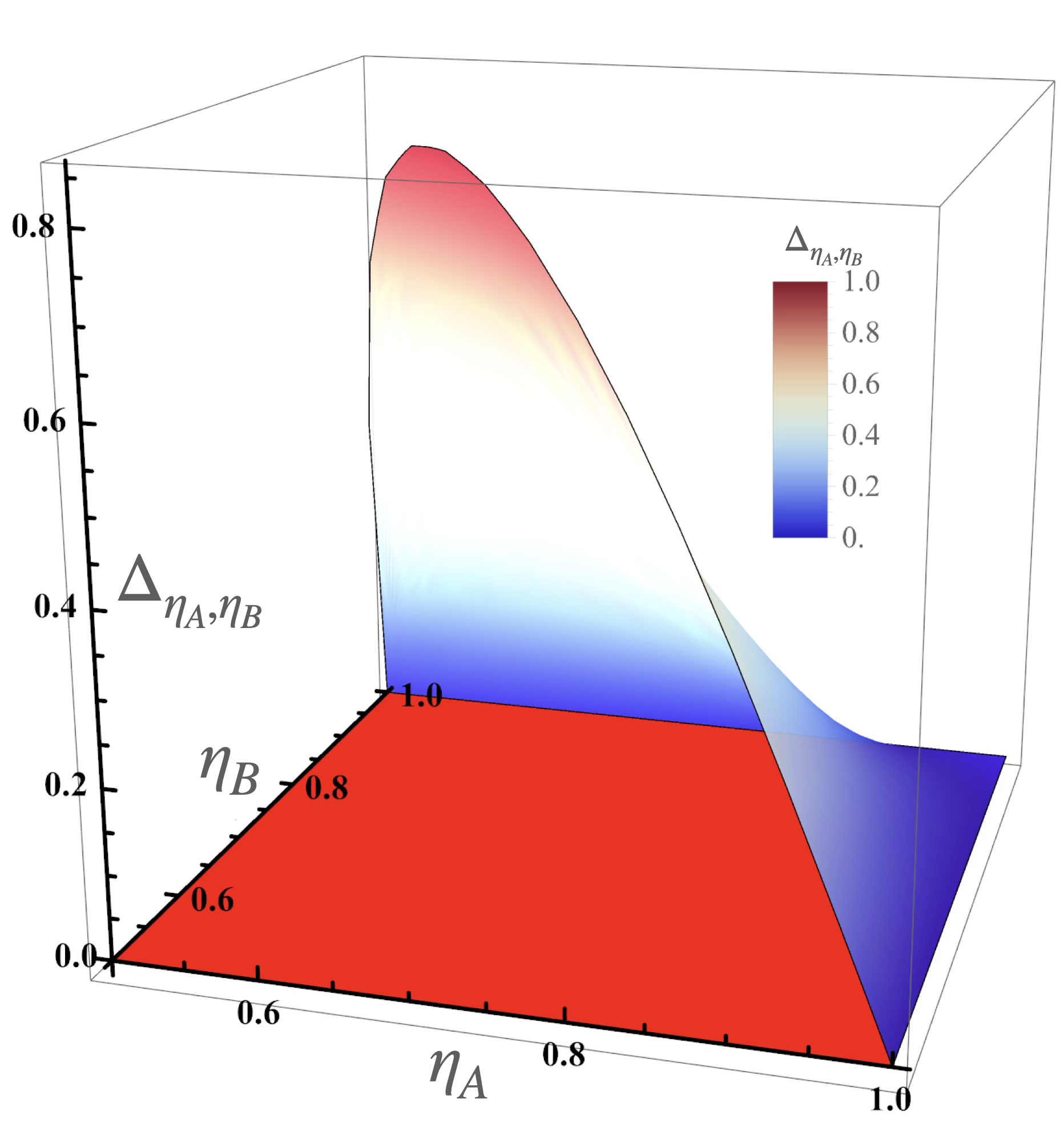}
    \caption{\emph{Optimal assignment strategy for maximum loophole-free violation of the CHSH inequality:---} A plot against efficiencies $\eta_A,\eta_B\in(\frac{1}{2},1]$ satisfying \eqref{critical_efficiency} of the difference $\delta_{\eta_A,\eta_B}$ \eqref{differenceAssignment} in the maximal effective violation of the CHSH inequality \eqref{effectiveCHSH} obtained via the assignment strategy $q(ab|xy)=\delta_{a,+1}\delta_{b,+1}$ and the doubly-tilted CHSH inequality \eqref{tilted_CHSH} and the only other distinct assignment strategy $q'(ab|xy)=\delta_{a,-1}\delta_{b,+1}$ and the distinct doubly-tilted CHSH inequality \eqref{alttilted_CHSH} up to input output relabeling. Since $\Delta_{\eta_A,\eta_B}\geq0$ for $\eta_A,\eta_B\in(\frac{1}{2},1]$ satisfying \eqref{critical_efficiency}, the assignment strategy $\bm{q}$ and the doubly-tilted CHSH inequality \eqref{tilted_CHSH} is the optimal choice for maximal loophole-free nonlocality in the CHSH scenario.}
    \label{fig:optAssign}
\end{figure}

\section{Maximal violation of the general doubly-tilted CHSH inequality and self-testing}\label{alpanotbeta}

In this section, we present and proof the self-testing statements for the maximum violation of doubly-tilted CHSH inequalities \eqref{general_tilted}. In particular, the results presented are the complete generalization (for all $\alpha,\beta\in[0,2]$ such that $0<\alpha+\beta<2$) of the self-testing statements for the symmetrically $(\alpha=\beta)$ tilted CHSH inequalities presented and proved in Section \ref{deriveAnalytic}.

Let us first recall that, since for doubly-tilted CHSH inequalities \eqref{general_tilted}, Alice and Bob have two dichotomic measurements each, Jordan's lemma implies that the optimal local measurements are qubit measurements parametrized as in \eqref{qubitParam} such that the doubly-tilted CHSH operator can be expressed as \eqref{sym_tilted_functional}. We are now prepared to present the main result of this Section, namely, self-testing statements for the maximum quantum violation of doubly-tilted CHSH inequalities, via the following Theorem.
%Here we present a generalization of the results presented in Section~(\ref{deriveAnalytic}) for the case of symmetrical detection efficiencies. In what follows we will prove the following theorem

\begin{theorem}\label{doublyTiltedTheorem} [Self-testing with doubly-tilted CHSH inequalities] The maximum quantum violation $c_{\mathcal{Q}}(\alpha,\beta)$ of the doubly-tilted CHSH inequality \eqref{general_tilted}, is the largest root of the degree 6 polynomial,  
\begin{equation}\label{analySolgen}
    f(\lambda) = \sum_{k=0}^6 \tau_k \lambda^k,
\end{equation}
with
\begin{equation} \label{alphaNotEqSol}
    \begin{split}
        \tau_0 =& \alpha^6 \left(27 \beta^2-8\right)+\alpha^4 \left(-54 \beta^4+48 \beta^2 + 32\right) \\
        &+ \alpha^2 \left(27 \beta^6+48 \beta^4-400 \beta^2 + 128\right) \\
        &- 8\beta^6 + 32\beta^4 + 128\beta^2-512. \\
        \tau_1 =& -168\alpha^3\beta^3 + 60 \alpha^5\beta + 160\alpha^3\beta + 60\alpha  \beta^5. \\
        &+ 160\alpha\beta^3-576\alpha\beta. \\
        \tau_2 =& -6\alpha^4\beta^2-6\alpha^2 \beta^4-64\alpha^2\beta^2+20\alpha^4 \\ 
        &+ 96\alpha^2+20\beta^4+96\beta^2 + 320. \\
        \tau_3 =& 8\alpha^3\beta^3-24\alpha^3 \beta -24\alpha\beta^3+96 \alpha\beta. \\
        \tau_4 =& 11 \alpha^2 \beta^2-16 \alpha^2-16 \beta^2-64. \\
        \tau_5 =& 4\alpha\beta. \\
        \tau_6 =& 4.
    \end{split}
\end{equation}
\end{theorem}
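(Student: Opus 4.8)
The plan is to mirror, step by step, the proof of Theorem~\ref{selfTestSymmetric} but without imposing $\alpha=\beta$. First I would invoke Jordan's lemma exactly as before to reduce to the two-qubit parametrization \eqref{qubitParam}, so that the Bell operator $\hat{C}_{\alpha,\beta}$ in \eqref{sym_tilted_functional} becomes a $4\times 4$ Hermitian matrix whose entries are polynomials in $c_A,c_B,s_A,s_B$ and the fixed parameters $\alpha,\beta$. Writing out the characteristic polynomial $q(\lambda,c_A,c_B,\alpha,\beta)=\det(\lambda\hat{\mathbb 1}-\hat{C}_{\alpha,\beta})$ and eliminating $s_A^2=1-c_A^2$, $s_B^2=1-c_B^2$ gives a quartic in $\lambda$ with coefficients polynomial in $c_A,c_B$ (and now depending on both $\alpha$ and $\beta$ separately, so the linear-in-$\lambda$ and constant terms will no longer be symmetric). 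The maximal quantum value $c_{\QC}(\alpha,\beta)$ is the largest root of this quartic maximized over $c_A,c_B\in[0,1)$.

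Next I would set up the Lagrangian $L=\lambda+s\,q(\lambda,c_A,c_B,\alpha,\beta)$ and write the stationarity conditions $\partial_\lambda L=0$, $\partial_{c_A}q=0$, $\partial_{c_B}q=0$, $q=0$, exactly as in \eqref{poly_conditions0}. This is again a zero-dimensional polynomial system, and the key computational step is to compute a Gröbner basis of the ideal it generates under an elimination order that eliminates $s$, $c_A$, $c_B$, leaving a univariate polynomial in $\lambda$. Carrying this out (in Mathematica, as documented in the accompanying notebook \cite{scalaSelfTesting}) I expect to obtain a polynomial of degree $8$ or $10$ in $\lambda$ which factors: the spurious factors should again be the eigenvalue branches corresponding to compatible measurements, namely $\lambda=2(1+\alpha+\beta)$ and $\lambda=2+\alpha+\beta$-type roots (the local-bound branch) together with a factor like $\lambda^2-(\alpha-\beta)^2$ or similar. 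Dividing these out should leave precisely the degree~$6$ polynomial $f(\lambda)=\sum_{k=0}^6\tau_k\lambda^k$ with the coefficients listed in \eqref{alphaNotEqSol}; verifying that the claimed $\tau_k$ are exactly the quotient is then a finite symbolic check. One should also confirm that the largest real root of $f$ indeed exceeds the local bound $c_\LC(\alpha,\beta)=2+\alpha+\beta$ on the region $0<\alpha+\beta<2$ so that it is the genuine maximum and not a spurious critical value, and that the corresponding eigenvalue of the $4\times 4$ matrix is nondegenerate.

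For the self-testing part I would recover the optimal cosines $c_A^*(\alpha,\beta)$, $c_B^*(\alpha,\beta)$ from the remaining Gröbner basis elements. Concretely, as in the symmetric case one divides $\partial_{c_A}q$ by its factor $\propto(1-c_B)$ and $\partial_{c_B}q$ by $\propto(1-c_A)$ (these factors encode the compatible-measurement locus) to obtain two polynomial relations linking $c_A,c_B,\lambda,\alpha,\beta$; eliminating $\lambda$ between them now gives a nontrivial algebraic curve relating $c_A$ and $c_B$ (rather than forcing $c_A=c_B$), and substituting back expresses each $c_X^*$ as an explicit algebraic function of $\lambda^*=c_{\QC}(\alpha,\beta)$, $\alpha$ and $\beta$. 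Since the optimal eigenvalue is nondegenerate, the optimal two-qubit state is the unique (up to phase) eigenvector of \eqref{qubit_bell_matrix} with eigenvalue $\lambda^*$, hence also determined by $c_{\QC}(\alpha,\beta)$; wrapping the two-qubit observables and state into the standard local isometry \eqref{isometries} upgrades ``determined up to local unitaries on each block'' to ``self-tested up to local isometries'', just as in the known $\beta=0$ case.

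The main obstacle I anticipate is the Gröbner basis elimination itself: with two free parameters $\alpha,\beta$ the intermediate polynomials are far larger than in the symmetric case, the spurious factors are harder to identify by inspection, and one must be careful that dividing by $(1-c_A)$, $(1-c_B)$ and other apparent factors does not discard a genuine branch of solutions (in particular one must separately check the boundary loci $c_A\in\{0,1\}$, $c_B\in\{0,1\}$ and confirm they are not optimal on $0<\alpha+\beta<2$). A secondary subtlety is proving that the \emph{largest} real root of the degree~$6$ polynomial is the one realized by a \emph{valid} quantum strategy (i.e.\ that the associated $c_A^*,c_B^*$ land in $[0,1)$ and give a Hermitian operator with that number as its top eigenvalue) rather than an extraneous critical point introduced by the Lagrange/Gröbner machinery; this is handled, as in Theorem~\ref{selfTestSymmetric}, by explicitly exhibiting the optimal strategy and checking numerically/symbolically across the parameter range.
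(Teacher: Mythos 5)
Your proposal follows essentially the same route as the paper's proof: Jordan's lemma reduction to the two-qubit parametrization \eqref{qubitParam}, the Lagrangian stationarity conditions on the characteristic polynomial $q(\lambda,c_A,c_B,\alpha,\beta)$, Gr\"obner-basis elimination of $s$, $c_A$, $c_B$ (the paper obtains a degree-$10$ univariate polynomial, four of whose roots never violate the local bound $2+\alpha+\beta$ and are divided out to leave the degree-$6$ factor $f(\lambda)$), and recovery of $c_A^*$, $c_B^*$ by dividing the derivative conditions by their $8(1-c_B)$ and $8(1-c_A)$ factors and eliminating $\lambda$, with nondegeneracy of the top eigenvalue giving the state self-test. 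The only details you leave open --- the degree of the eliminated polynomial and the identification of the spurious factors --- are resolved exactly as you anticipate, so the approaches coincide.
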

Moreover, $C_{\alpha,\beta}(\bm{p})=c_{\mathcal{Q}}(\alpha,\beta)$ \emph{self-tests} a two-qubit quantum strategy with optimal $(*)$ local observables of the form \eqref{qubitParam}, such that Alice's optimal cosine $c^*_A(\alpha,\beta)$ is the following function of the maxmimum quantum value $c_{\QC}(\alpha,\beta)$,
\begin{equation} \label{optimalCAgeneral}
    \begin{split}
    c^*_A(\alpha, \beta)=&\left(32-8\beta^2\right)^{-1} \Bigg[-16-3\alpha^2\beta^2 + 12\alpha^2 + 4\beta^2 \\
    &+ \sqrt{4-\alpha^2}\sqrt{4-\beta^2} \\
    &\sqrt{16-4\alpha^2 4\beta^2 + 9\alpha^2\beta^2 + 16\alpha\beta\, c_\QC(\alpha, \beta)}\Bigg],
    \end{split}
\end{equation}
and Bob's optimal cosine $c^*_B(\alpha,\beta)$ is the following function of Alice's optimal cosine $c^*_A(\alpha,\beta)$, 
\begin{equation} \label{optimalCBgeneral}
    \begin{split}
    c^*_B(\alpha, \beta)=& \frac{\left(4-\beta^2\right) c^*_A(\alpha ,\beta )+\alpha^2-\beta^2}{4-\alpha^2}
    \end{split}
\end{equation}
for all $\alpha,\beta\in[0,2]$ such that $0\leq \alpha+\beta<2$.
\begin{proof}
The proof proceeds on the same lines as the proof of Theorem \ref{selfTestSymmetric} in the main text. The characteristic polynomial of the Bell operator \eqref{sym_tilted_functional} with the parametrization \eqref{qubitParam} has the expression,
\begin{equation} \label{gen_charpoly}
    \begin{split}
        q(\lambda, c_A, c_B, \alpha,& \beta) =\lambda ^4-2 \lambda ^2 \left(\alpha ^2+\beta ^2+4\right)\\
        +&8 \alpha  \beta  \lambda  \left(c_A \left(c_B-1\right)-c_B-1\right)\\
        +&8 \beta ^2 c_A \left(c_B^2-1\right)-8 c_B \left(\alpha ^2+\left(\beta ^2-2\right) c_B\right)\\
        +&\left(\alpha ^2-\beta ^2\right)^2+8 c_A^2 \left(c_B-1\right) \left(\alpha ^2-2 c_B-2\right)
    \end{split}
\end{equation}
where we have rewritten the sines $s_A,s_B$ in terms of the cosines $c_A,c_B$, so that the \eqref{gen_charpoly} is a polynomial function of $c_A,c_B$. For the optimal quantum strategy, $c_A=c^*_A(\alpha,\beta),c_B=c^*_B(\alpha,\beta)$, the maximal quantum value $c_{\QC}(\alpha,\beta)$ corresponds to the maximum eigenvalue $\lambda^*(\alpha,\beta)$ of \eqref{sym_tilted_functional}. Hence, to find $c_{\QC}(\alpha,\beta)$ we need to maximize the largest root of \eqref{gen_charpoly} over the parameters $c_A,c_B\in[0,1)$.

The Lagrangian for this optimization problem is
\begin{equation}
    L = \lambda + s\cdot q(\lambda, c_A, c_B, \alpha, \beta),
\end{equation}
where $s$ is a Lagrange multiplier. A stationary point of $L$ satisfies the conditions $\partial_{\lambda}L = \partial_{c_A}L = \partial_{c_B} L = \partial_s L = 0$, and is therefore a solution of the following polynomial equations,
\begin{subequations}
\begin{gather} \label{gen_poly_conditions}
        \partial_{\lambda}L = 0, \\ \label{gen_poly_conditions2}
        \partial_{c_A}q(\lambda, c_A, c_B, \alpha, \beta) = 0, \\ \label{gen_poly_conditions3}
        \partial_{c_B}q(\lambda, c_A, c_B, \alpha, \beta) = 0, \\ \label{gen_poly_conditions4}
        q(\lambda, c_A, c_B, \alpha, \beta) = 0.
\end{gather}
\end{subequations}
The cosines ${c}_A^*(\alpha,\beta),{c}_B^*(\alpha,\beta)$ parametrizing the optimal measurements, and the maximal quantum value $c_{\QC}(\alpha,\beta)={\lambda}^*(\alpha,\beta)$, are thus common roots of the polynomials $\partial_{\lambda}L,\, \partial_{c_A}q,\, \partial_{c_B}q$, and $q$ in \eqref{gen_charpoly}. The ring defined by these polynomials admits a Gröbner basis that can be computed in Mathematica, which allows the elimination of parameters $c_A$ and $c_B$. This procedure results in a degree $10$ polynomial over $\lambda$. Four of the ten roots of this polynomial do not violate the local bound $2+\alpha+\beta$ for all $\alpha,\beta\in[0,2]$ such that $0\leq\alpha+\beta<2$. Taking the quotient of \eqref{gen_charpoly} with respect to the product of these trivial roots yields $f(\lambda)$ in \eqref{analySolgen}. Yet again, only the largest real root $\lambda^*(\alpha,\beta)$ of $f(\lambda)$ in \eqref{analySolgen} violates the local bound $2+\alpha+\beta$ for all $\alpha,\beta\in[0,2]$ such that $0\leq\alpha+\beta<2$, and hence $c_{\QC}(\alpha,\beta)=\lambda^*(\alpha,\beta)$. 

In particular, in contrast to the symmetric ($\alpha=\beta$) case for which a closed-form solution is presented in the Mathematica notebook \cite{scalaSelfTesting}, the maximal quantum value $c_{\QC}(\alpha,\beta)$ in the general case ($\alpha\neq \beta$) corresponds to the largest real root of a degree 6 polynomial \eqref{analySolgen}, which can be obtained numerically to any desired precision.

To derive the optimal cosines $c^*_{A}(\alpha,\beta),c^*_{B}(\alpha,\beta)$, we consider the polynomial equations \eqref{gen_poly_conditions2}, \eqref{gen_poly_conditions3}. Since, $c_B=1$, $c_A=1$ correspond to compatible measurements, we divide \eqref{gen_poly_conditions2}, \eqref{gen_poly_conditions3} by their respective factors, $8(1-c_B)$, $8(1-c_A)$ to obtain the equations,

\begin{subequations}\label{deriv_polys}
\begin{gather} \label{deriv_polys2}
        (\alpha^2-4c_A)(1+c_B)+2\beta^2 c_A + \alpha\beta\lambda=0, \\ \label{deriv_polys3}
        (\beta^2-4c_B)(1+c_A)+2\alpha^2 c_B + \alpha\beta\lambda=0.  
\end{gather}
\end{subequations}
Taking the difference of \eqref{deriv_polys2} and \eqref{deriv_polys3} to eliminate $\lambda$ we obtain the relation \eqref{optimalCBgeneral}, allowing us to express Bob's optimal cosine $c^*_B(\alpha,\beta)$ as a function of Alice's optimal cosine $c^*_A(\alpha,\beta)$. 

Plugging \eqref{optimalCBgeneral} back into \eqref{deriv_polys2}, we find that Alice's optimal cosine $c^*_A(\alpha,\beta)$ must be a root of the following degree $2$ polynomial,
\begin{equation} \label{finalPolyGen}
    \begin{split}
    r(c_A)=& \alpha \left[\alpha ^2 \beta  \lambda +2 \alpha ^3-\alpha  \left(\beta ^2+4\right)-4 \beta  \lambda\right] \\
    &+ \left(3 \alpha ^2-4\right) \left(\beta ^2-4\right) c_A - 4 \left(\beta
   ^2-4\right) c_A^2.
    \end{split}
\end{equation}
We find that, for all $\alpha,\beta\in[0,2]$ such that $0\leq\alpha+\beta<2$ and $\lambda=\lambda^*(\alpha,\beta)=c_{\QC}(\alpha,\beta)$, Alice's optimal cosine $c^*_A(\alpha,\beta)$ corresponds to largest root of \eqref{finalPolyGen}, given by \eqref{optimalCAgeneral} since for the other root either $c_A\notin [0,1)$ or $c_B\notin [0,1)$. Since, both $c^*_A(\alpha,\beta)$ \eqref{optimalCAgeneral} and $c^*_B(\alpha,\beta)$ (obtained via \eqref{optimalCBgeneral}) are uniquely determined by $c_{\QC}(\alpha,\beta)$, we conclude that the optimal measurements are self-tested by the maximal quantum value $C_{\alpha,\beta}(\bm{p})=c_{\QC}(\alpha,\beta)$. Moreover, we find that the maximum eigenvalue of \eqref{sym_tilted_functional} with the optimal settings $c_A=c^*_A(\alpha,\beta)$ and $c_B=c^*_B(\alpha,\beta)$ is nondegenerate (since no other eigenvalue violates the local bound $2+\alpha+\beta$) for all $\alpha,\beta\in[0,2]$ such that $0\leq\alpha+\beta<2$, which implies that $C_{\alpha,\beta}(\bm{p})=c_{\QC}(\alpha,\beta)$ also self-tests the optimal two-qubit non-maximally entangled state specified by the eigenvector associated with the maximum eigenvalue.  
%Since $c_B=1$ or $c_A=1$ correspond to compatible measurements, it follows that within the region of quantum advantage the optimal values of $c_A$ and $c_B$ are determined by the second factor of each of equations \eqref{deriv_polys}. By considering only these factors, we can eliminate $\lambda$ to obtain the following relation between the optimal cosines $c^*_A$ and $c^*_B$
%\begin{equation}
%    c^*_B(\alpha, \beta)=\frac{\alpha^2 - \beta^2 + \left(4-\beta^2 \right) c^*_A(\alpha, \beta)}{4-\alpha^2}.
%\end{equation}
%Replacing this relation back in the first equation in \eqref{deriv_polys} allows us to express $c^*_A$ ---and hence $c^*_B$--- explicitly as a function of only the tilting parameters $\alpha$ and $\beta$, for it follows that it is given by the largest root of the following second degree polynomial 

%Thus we find that the optimal value of both $c_A$ and $c_B$ are completely determined by the quantum value $c_\QC(\alpha, \beta)$, and hence $c^*_A$ and $c^*_B$ are unique. The functional quantum value thherefore self-tests the optimal measurements. Moreover, plugging back these optimal cosines in the Bell operator we can check that the ensuing largest eigenvalue, which gives the value of $c_\QC(\alpha, \beta)$ is nondegenerate (see FIG. \ref{fig:fullimage} (d) ), implying that the state in the optimal realization is also self-tested by the quantum value.
\end{proof}

\end{document}